\newcommand{\R}{\mathbb R}
\newcommand{\cd}[2]{\frac{\partial #1}{\partial #2}}
\newcommand{\dd}[2]{\frac{\textrm{d} #1}{\textrm{d} #2}}
\newcommand{\ud}{\textrm{d}}
\newcommand{\hp}{h^{\prime}}
\newcommand{\dpr}{c^{\prime}}
\newcommand{\cpr}{c^{\prime}}
\newcommand{\Bp}{\Theta}%
\newcommand{\Rp}{R^{\prime}}
\newcommand{\rp}{r^{\prime}}
\newcommand{\Rbar}{\bar{R}}
\newcommand{\rbar}{\bar{r}}
\newcommand{\Rmax}{R_{\textrm{max}}}
\newcommand{\rmax}{r_{\textrm{max}}}
\newcommand{\Rlbar}{\underline{R}}
\newcommand{\rlbar}{\underline{r}}
\newcommand{\Weq}{W^{(\textrm{eq})}}
\newcommand{\Wso}{W^{*}}
\newcommand{\Geq}{G^{(\textrm{eq})}}
\newcommand{\Gso}{G^{*}}
\newcommand{\xeq}{x^{(\textrm{eq})}}
\newcommand{\xso}{x^{*}}
\newcommand{\zso}{z^{*}}
\newcommand{\Rso}{R^{*}}
\newcommand{\rso}{r^{*}}
\newcommand{\X}{\mathcal X}
\newcommand{\Gresp}{G^{\textrm{(resp)}}}
\newcommand{\xresp}{x^{\textrm{(resp)}}}
\newcommand{\cti}{\tilde{c}}
\newcommand{\xl}{\underline{x}}
\newcommand{\eg}{e.g., }
\newcommand{\ie}{i.e., }
\DeclareMathOperator*{\argmax}{arg\,max}
\newtheorem{theorem}{Theorem}%
\newtheorem{proposition}{Proposition}%
\newtheorem{corollary}{Corollary}%
\newtheorem{lemma}{Lemma}%
\theoremstyle{definition}
\newtheorem{example}{Example} 
\title{\huge Incentive Mechanisms for Internet Congestion Management: Fixed-Budget Rebate \emph{versus} Time-of-Day Pricing}
\author{%
Patrick Loiseau,~\IEEEmembership{Member,~IEEE,} Galina Schwartz,~\IEEEmembership{Member,~IEEE,} John Musacchio, Saurabh Amin and  S. Shankar Sastry,~\IEEEmembership{Fellow,~IEEE,}
\thanks{Patrick Loiseau is with EURECOM, Sophia-Antipolis, France. E-mail: {\tt patrick.loiseau@eurecom.fr}. Part of this work was done while he was at UC Santa Cruz, CA, USA.}
\thanks{Galina A. Schwartz and S. Shankar Sastry are with the department of Electrical Engineering and Computer Science, UC Berkeley, CA, USA.  E-mails: {\tt \{schwartz, sastry\}@eecs.berkeley.edu}}
\thanks{John Musacchio is with the Technology and Information Management program, UC Santa Cruz, CA, USA.  E-mail: {\tt johnm@soe.ucsc.edu}}
\thanks{Saurabh Amin is with the department of Civil and Environmental Engineering, MIT, MA, USA.  E-mail: {\tt amins@mit.edu}} 
\thanks{{\color{black} The authors thank the editor and the anonymous referees for their thorough reviews which lead to significant improvements of the paper.}}
\thanks{This research was supported by NSF grant CNS-0910711 and by TRUST (Team for Research in Ubiquitous Secure Technology), which receives support from the NSF (\#CCF-0424422) and the following organizations: AFOSR (\#FA9550-06-1-0244), BT, Cisco, DoCoMo USA Labs, EADS, ESCHER, HP, IBM, iCAST, Intel, Microsoft, ORNL, Pirelli, Qualcomm, Sun, Symantec, TCS, Telecom Italia, and United Technologies.
}
}
\begin{document}

\maketitle

\begin{abstract}
Mobile data traffic has been steadily rising in the past years. This has generated a significant interest in the deployment of incentive mechanisms to reduce peak-time congestion. Typically, the design of these mechanisms requires information about user demand and sensitivity to prices. Such information is naturally imperfect. In this paper, we propose a \emph{fixed-budget rebate mechanism} that gives each user a reward proportional to his percentage contribution to the aggregate reduction in peak time demand.  For comparison, we also study a time-of-day pricing mechanism that gives each user a fixed reward per unit reduction of his peak-time demand. To evaluate the two mechanisms, we introduce a game-theoretic model that captures the \emph{public good} nature of decongestion. For each mechanism, we demonstrate that the socially optimal level of decongestion is achievable for a specific choice of the mechanism's parameter. We then investigate how imperfect information about user demand affects the mechanisms' effectiveness. From our results, the fixed-budget rebate pricing is more robust when the users' sensitivity to congestion is ``sufficiently'' convex. This feature of the fixed-budget rebate mechanism is attractive for many situations of interest and is driven by its closed-loop property, i.e., the unit reward decreases as the peak-time demand decreases.
\end{abstract}

\begin{IEEEkeywords}
congestion pricing; lottery-based incentive mechanisms; public good provisioning; probabilistic pricing
\end{IEEEkeywords}

\section{Introduction}
\label{sec.introduction}

The consumer demand for network bandwidth is steadily growing. For instance, mobile data traffic nearly tripled during each of the past three years due to increasing penetration of mobile devices such as smartphones~\cite{Cisco}. Numerous studies indicate that this growth will continue as bandwidth intensive applications like video streaming continue to gain popularity~\cite{WP}. The growing demand for bandwidth forces the Internet Service Providers (ISPs) to adopt congestion management schemes, including capacity expansion and pricing mechanisms. Although the ISPs have historically used flat-rate pricing, many ISPs are now interested in moving to tiered pricing schemes~\cite{WP, WSJ}. However, experiments have shown that users prefer flat-rate pricing, and will pay a premium to avoid being metered~\cite{Odlyzko,Varian}. This makes the adoption of real-time pricing particularly challenging. Thus, novel pricing mechanisms that balance the conflict between the need for network decongestion and the users' preference for flat prices are of great practical interest.

Network bandwidth (and hence the level of congestion) is not uniform during the course of a day; it drops at night after the prime time evening hours. This variability in demand can be exploited to design variable pricing mechanisms. For instance, time-of-day pricing mechanisms have been designed to incentivize users to shift a part of their demand to the off-peak times~\cite{Jiang08a, Wong11a}. However, such mechanisms typically require information about user demand; in particular, the knowledge of user preferences about shifting their demand from peak to off-peak times. 
In practice, this information may be inaccurate or just too difficult to obtain due to privacy concerns \cite{Wong11a}. Thus, robustness to imperfect information about user preferences must be taken into account in the design of any practically viable mechanism. 

Recently, a fixed-budget rebate mechanism (termed ``raffle-based scheme'') was proposed for decongestion of a shared resource \cite{Loiseau11a}. Decongestion is viewed as a public good: when a user reduces/shifts his demand away from peak times, his contribution benefits all the users sharing the resource. The fixed-budget rebate mechanism in \cite{Loiseau11a} is inspired by economic ideas on incentivizing contributions to provision of public goods~\cite{Morgan00a}. In this mechanism, each user is entitled a reward proportional to his percentage contribution to the total demand reduction. An attractive feature of this mechanism is that, in practice, it can be implemented via a lottery scheme, where each participating user wins a \emph{prize} with a probability equal to the fraction he contributed to the total demand reduction. 

In this article, we investigate the fixed-budget rebate mechanism, and compare it with the more traditional time-of-day pricing mechanism for reducing Internet congestion. In Sec.~\ref{sec.model}, we introduce a game-theoretic model with a continuum of non-atomic users. Each user chooses his peak time and off-peak time demand to maximize his utility. The user utility models both his benefit from peak time decongestion, and his willingness to reduce/shift away from the peak time period. The model allows us to compute the user equilibrium welfare for both mechanisms: fixed-budget rebate and time-of-day pricing. We compare their sensitivity to information imperfections for the case when an ISP with imperfect information about user demand chooses the mechanism parameters. Our results in Sec.~\ref{sec.results1} can be summarized as follows:  
\begin{itemize}
\item[(i)] For any given parameters, for each mechanism, a Nash equilibrium exists, and it is unique.
\item[(ii)] For the case when ISP has perfect information about user demand, for each mechanism, the ISP can choose the mechanism parameter to achieve the socially optimal level of decongestion. %
\item[(iii)] For the case when ISP has imperfect information about user preferences, the fixed-budget rebate mechanism is more robust to the time-of-day pricing mechanism a under mild condition on the users' sensitivity to congestion. 
\end{itemize}

Our analysis reveals several desirable features of fixed-budget rebate mechanism. First, the condition under which it is more robust than the time-of-day-pricing can be interpreted as ``convex'' user sensitivity to congestion (or delay). This condition is expected to be predominant, especially for today's Internet which supports highly delay-sensitive applications. 
This robustness of the fixed-budget rebate mechanism is driven by its closed-loop property: as the aggregate demand shifts away from peak time period, the user reward for his per unit contribution decreases. 
Finally, if an ISP decides to implement the fixed budget rebate mechanism, he knows the total reward (or rebate) that he owes to the users even when the information about user demand characteristics is imperfect. In contrast, under the time-of-day pricing mechanism, the ISP will have to design the mechanism based on an estimate of the total expected reward that he will owe to the users. 

The rest of the paper is organized as follows. Sec.~\ref{sec.related_work} discusses the related literature. We introduce the model in Sec.~\ref{sec.model}. In Sec.~\ref{sec.results1}, we analyze the two incentive mechanisms (Nash equilibrium and social optimum) and compare them in terms of robustness to imperfect information. We conclude in Sec.~\ref{sec.conclusion}. Proofs are relegated to Appendices. 

\section{Related work}
\label{sec.related_work}

Many pricing mechanisms have been proposed to manage quality of services (QoS) in networks, see e.g., surveys  \cite{Henderson01a, Tuffin03a, Sen12b}. 
For instance, in \cite{Honig95a}, Honig and Steiglitz propose a usage-based pricing mechanism, and analyze it using a model with delay-sensitive users. Their results show how to find the price that maximizes the provider's revenue by solving a fixed-point equation. A similar model is used in \cite{Basar02a} where Ba{\c s}ar and Srikant analyze the many-users limit. They show that, as the number of users increases, the provider's revenue per unit of bandwidth increases and conclude that this gives providers an incentive to increase their network capacity.  In a number of papers, e.g., \cite{Mendelson90a, Odlyzko99a, Marbach04a}, pricing mechanisms based on multiple classes of customers with different priorities are proposed and analyzed in terms of equilibrium achieved and optimal price per class. In \cite{Shen07a, Shen11a}, Shen and Ba\c{s}ar investigate the performance of non-linear pricing in a model similar to \cite{Basar02a} and find an improvement of up to 38\% over linear pricing in some cases. However, in all the aforementioned papers, the demand is assumed stationary and the price is fixed independently of the instantaneous network congestion or of the time of the day. 
In contrast, in this paper, we investigate linear pricing mechanisms that leverage the time variability of user demand using a single priority class.

A few papers have proposed mechanisms with prices dependent on congestion levels. 
In \cite{Paschalidis00a},  Paschalidis and Tsitsiklis propose a congestion-based pricing mechanism in the context of loss networks (i.e., phone). They provide a dynamic programming formulation of the revenue maximization problem and of the welfare maximization problem. Then, they show that this dynamic congestion pricing mechanism  can be well approximated by a simpler static time-of-day pricing. An alternative mechanism called ``Trade \& Cap'',   was recently proposed by Londo\~{n}o, Bestavros and Laoutaris \cite{Londono10a}. It works in two phases. First, users engage in a trading game where they choose an amount of \emph{reserved} bandwidth slots to buy for hard-constraints traffic. In the second phase, the remaining bandwidth is allocated to users as \emph{fluid} bandwidth, in proportion of their remaining ``buying power''. They show that this mechanism smoothes the aggregate demand to a certain level. In their model, users have a cost function that increases linearly with the total demand in a given slot.  
In this paper, we consider simpler one-phase  pricing mechanisms with fixed parameters.  Our model also differ from these papers in that users have elastic demand and their utility is an arbitrary function of the congestion level.

Two recent papers analyze time-of-day pricing mechanisms over $n$ time slots \cite{Jiang08a, Wong11a}. %
In \cite{Jiang08a}, Jiang, Parekh and Walrand consider a model where users have unit demand. Each user chooses one time-slot in which he transmits its entire demand, to maximize his utility. The authors of \cite{Jiang08a} obtain a bound on the price of anarchy due to users selfishness. 
In \cite{Wong11a}, Wong, Ha and Chiang consider a model with users transmitting \emph{sessions} of random length. Sessions arrive as a Poisson process and each session is characterized by a \emph{waiting function} which reflects the willingness of the user to delay his entire session for a given time, in exchange for a reward given by the provider. The authors show how to compute the optimal reward levels in order to maximize the provider profit by balancing the congestion cost due to demand exceeding capacity and the reward amount.  Further analysis of this mechanism called ``TUBE'', as well as implementation are provided in \cite{Ha12a}. However, in their model, users are only sensitive to prices (the effect of congestion on the user utility is not considered) and the analysis is not game-theoretic. 
In this paper, we consider a  model with two time slots (peak and off-peak). We provide a game-theoretic analysis.  In our model, user utility functions are the closest to \cite{Jiang08a} where user cost due to latency is an arbitrary (convex) function of the load. However, our setup differs from  \cite{Jiang08a}, as each user in our model can shift an arbitrary continuous fraction of his demand from peak time  to off-peak time.

In this paper, we show that the problem of decongesting the peak time can be seen as a public good provision problem. Our model is closely related to the ``raffle-based'' incentive mechanism, which has been recently proposed by Loiseau, Schwartz, Musacchio, Amin and Sastry \cite{Loiseau11a}. That work was  inspired by Morgan, who in \cite{Morgan00a} pioneered the investigation of using the lotteries for public good provision.
The public good perspective has been applied in recent works by Sharma and Teneketzis \cite{Sharma09a, Sharma11a} in the context of optimal power allocation for wireless networks. 
The connection of lottery-based mechanism with public good provision was originally noted in \cite{Morgan00a} and received an extensive attention in economic literature (see \cite{Lange07a}). 
The idea of lotteries has also been used in different contexts. For instance, lottery scheduling is a widely applied technique in resource scheduling in computer operating systems \cite{Waldspurger94a}. 
Recent interest in the application of lotteries to congestion management was facilitated by Merugu, Prabhakar and Rama who demonstrated with a field study that lottery-based mechanisms can be effectively used to reduce congestion in transportation systems \cite{Merugu09a}. 
In contrast, our contributions are methodological. We use a game-theoretic model to analytically study the performance of a lottery-like mechanism and compare it to a more standard time-of-day pricing mechanism. 

\section{Model}
\label{sec.model}

Let us consider a shared Internet access point with capacity $C$. Based on the usage patterns, let the day be divided into two time periods: a peak time of duration $T_p$ and an off-peak time of duration $T_o$. We assume that each time period corresponds to a stationary regime with respective loads $\rho_p$ and $\rho_o$. 

An access point is typically shared by a finite number of users, each having his own preference for time periods which we model by user \emph{type} {\color{black} (the type of a user will typically depend on the applications that he uses)}. To account for a large number of users, we model the set of users as a continuum of non-atomic users; \ie each user contributes a negligible fraction of the total demand. We use the measure-theoretic framework similar to \cite{Aumann64a, AliKhan02a}. Let $(\Theta, \mathcal{F})$ be a measurable space where $\Theta$ is the set of  user types. We assume that the user types are distributed according to a finite measure $\mu$ on $(\Theta, \mathcal{F})$\footnote{Throughout the paper, we assume that all functions of $\theta$ are measurable. In \cite{Aumann64a}, Aumann notes that ``the measurability assumption is of technical significance only and constitutes no real economic restriction.''}. 
{\color{black} While simpler modeling assumptions can be used (e.g., considering only two types), using an arbitrary measure of types $\mu$  gives a higher flexibility that can be interesting to fit real populations.}

Note that for simplicity, we describe the population at the granularity of types instead of users as in \cite{Aumann64a, AliKhan02a}. 
This is justified by the strict concavity of the user utilities (see assumptions below) which implies that at Nash equilibrium, all users of the same type choose the same action. 
{\color{black} As a consequence, although we do require that the measure of users is non-atomic (as for any non-atomic game), we do not require that the measure of types $\mu$ itself is non-atomic. For instance, if all users have the same type, measure $\mu$ is only constituted by one atom. Yet, each user of each type remains infinitesimally small, which means that the action of one user does not affect the aggregate outcome. }

\subsection{User utility}
\label{sec.utilities}

Each user of type $\theta\in\Theta$ chooses his peak-time demand $y_{\theta}$ and his off-peak time demand $z_{\theta}$ to maximize his utility
\begin{align}
\nonumber  &u_{\theta} (y_{\theta}, z_{\theta}, y_{-\theta}, z_{-\theta}) \! = \! P_{\theta} \left(y_{\theta} \right)  \! + \!  O_{\theta} \left(z_{\theta}\right) \! - \!  y_{\theta} L_p \! \left( \int_{\Theta} y_{\theta} \ud \mu (\theta) \right) \\ 
\label{eq.utility_PO} &  \quad - z_{\theta} L_o \! \left( \int_{\Theta} z_{\theta} \ud \mu (\theta) \right) - \left( y_{\theta} + z_{\theta} \right) q - p, 
\end{align}
where the notation $y_{-\theta}$ and $z_{-\theta}$ is standard: it denotes peak-time and off-peak-time demand choices for all user types but $\theta$. 
In \eqref{eq.utility_PO}, $P_{\theta} (\cdot)$ and $O_{\theta} (\cdot)$ are the utilities that a user of type $\theta\in\Theta$ gets for his demand in the peak time and off-peak time respectively. $L_p(\cdot)$ and $L_o(\cdot)$ are the disutilities due to congestion in the peak time and off-peak time respectively. These disutilities are per unit of demand, hence they are multiplied by the demand in each time. Finally, quantity $q\ge0$ is a fixed usage-based price (which could be zero) and quantity $p>0$ is a fixed monthly subscription price.

We assume that utilities $P_{\theta} (\cdot)$ and $O_{\theta} (\cdot)$ are twice differentiable increasing strictly concave functions of the demand. 
{\color{black} We assume that there is a fixed maximum peak-time demand $d_p$ (per day) that could correspond for instance to a subscription daily peak cap, that could be a maximum usable demand (determined by technology limitation), or that could be a maximum daily demand determined from empirical data}. For simplicity, we assume that this maximum peak-time demand is the same for each user, but more general cases could be handled easily\footnote{If users differ by their maximum peak-time demand, each user could be viewed as an appropriate number of users with identical maximum peak-time demand. The proposed model still applies with measure $\mu$ defined for all subset $\Theta_1\in\mathcal{F}$ by $\mu(\Theta_1) = \int_{\Delta} d\cdot \nu(\Theta_1, \ud d)$ where $\Delta$ is the set of maximum peak-time demands and measure $\nu$ on $\Theta\times\Delta$ represents the joint distribution of types and demand.} (in that case, user-dependent prices could also easily be handled). 
Each user can choose to shift to off-peak time, or to simply not use, part of his maximum peak-time demand. Additionally to the shifted peak-time demand, each user could have an initial off-peak-time demand. However, this additional demand does not modify our analysis as long as the peak time remains more congested than the off-peak time. For simplicity, we assume that the initial off-peak-time demand is zero, \ie the off-peak-time demand only corresponds to shifted peak-time demand. Then, we have the following constraint on the demands:\\[-3mm]
\begin{equation}
\label{eq.constraint} y_{\theta} + z_{\theta} \le d_p, \quad  (\theta\in\Theta). \\[-1mm]
\end{equation}

We assume that disutilities $L_{p} (\cdot)$ and $L_{o} (\cdot)$ are increasing strictly convex functions of the aggregate demand in each time (a similar assumption is made, \eg  by Jiang, Parekh and Walrand \cite{Jiang08a}). This assumption is realistic and quite general. As an example, let us focus on the average delay $\delta$ as a measure of the network quality, as in Honig and Steiglitz \cite{Honig95a}. Our assumption holds if ($i$) the disutility is an increasing convex function of the average delay and ($ii$) the average delay is an increasing strictly convex function of the aggregate demand %
or equivalently of the load in the corresponding time: \vspace{-2mm}
\begin{equation*}
\rho_p = (CT_p)^{-1}  \int_{\Theta} y_{\theta} \ud \mu (\theta),  \;  \; \rho_o = (CT_o)^{-1}  \int_{\Theta} z_{\theta} \ud \mu (\theta).
\end{equation*}
Assumption ($i$) is natural: an increase of the delay from zero to half a second creates no more disutility than from half a second to one second. This assumption is also made in \cite{Honig95a}. 
Assumption ($ii$) holds for the vast majority of queueing models considered in the literature. For example, it holds for the processor sharing queue (the most classical model for 3G and 4G networks \cite{Borst03a}), for which the average delay is \\[-3mm]
\begin{equation}
\label{eq.delay}
\delta\left( \rho_p \right) = \frac{\delta_0}{1-\rho_p}, \\[-2mm]
\end{equation}
where $\delta_0$ is a constant. It also holds for common models of wired networks such as the M/D/1 model considered in \cite{Honig95a} and the M/M/1 model considered by Shen and Ba\c{s}ar \cite{Shen07a, Shen11a}. 
Finally, we assume that despite the effect of users shifting part of their demand, the off-peak time remains relatively uncongested so that \\[-3mm]
\begin{equation}
\label{eq.L0}
L_o \left(  \int_{\Theta} z_{\theta} \ud \mu (\theta) \right) \simeq 0. \\[-1mm]
\end{equation}
This assumption is not strictly necessary but greatly simplifies the presentation without affecting the important effects that we consider in this model\footnote{If one wants to consider a  non-zero off-peak-time disutility, this assumption could be replaced by the relaxed assumption that when the aggregate shifted demand increases, the marginal peak-time disutility reduction is higher than the marginal off-peak-time disutility increase.}.

For numerical illustrations of our model, we use the following example of an Internet access point.
\begin{example}[Internet access point]
\label{ex.model_use_case}
The capacity is $C=1$~Gbps. Peak-time lasts $T_p=2$~h (\eg 6pm to 8pm), hence $T_o = 22$~h.  $\Theta = [0, 1]$ with a uniform distribution of types $\mu(\ud \theta) = D_p/d_p\cdot \ud \theta$, where $D_p=7.2\cdot 10^3$~Gbits and $d_p=7.2$~Gbits (which corresponds to $1,000$ users with peak-time capacity $1$~Mbps). 
The latency disutility is $L_p(\rho_p) = L_0 \delta (\rho_p)$ where $\delta (\rho_p)$ is given by the PS model~\eqref{eq.delay} and $L_0=0.065$. Peak-time utility is $P_{\theta}(x_{\theta}) = (1+\theta) P_0 \log(1+x/d_p)$ with $P_0=130$ and off-peak time utility is $O_{\theta} = 1/10 \cdot P_{\theta}(\cdot)$. The subscription price is $p=\$50$ and the usage-base price is zero.
\end{example}

\subsection{User type distribution}
\label{sec.measure}

On the timescale of a day, the population is heterogeneous with user types distributed according to measure $\mu$. However, we assume that each user has a type that varies randomly across the days of a month, with the same distribution $\mu$. Therefore, the population is homogeneous in average on the timescale of a month. In particular, with this assumption, the expected utility of each user on the timescale of a month equals the daily aggregate welfare \\[-3mm]
\begin{equation}
\label{eq.defW} W = \int_{\Theta} u_{\theta} \ud \mu ( \theta) \\[-4mm]
\end{equation}
normalized by $\mu(\Theta)$. 

Each user will buy a monthly contract (with subscription price $p$) to use the service if his expected utility over the month is positive, \ie here if\\[-4mm]
\begin{equation}
\label{eq.participation}
W>0.\\[-3mm]
\end{equation}

We assume that without any incentive mechanism, this condition is satisfied. Then, our assumption guarantees that with any welfare-improving incentive, each user will continue to participate, \ie to buy the monthly contract. 

{\color{black} Note that if the population cannot be assumed homogeneous at the timescale of a month, it is possible to divide it into subpopulations that can be assumed homogeneous and to apply our incentive mechanisms to each of these subpopulations. }

\subsection{Model reduction to one-dimensional strategy space}
\label{sec.reduction}

Before introducing the incentive mechanisms, we show that our model can be reduced to a one-dimensional strategy space focusing on the peak-time demand reduction.
With assumption~\eqref{eq.L0}, the utility~\eqref{eq.utility_PO} of a user of type $\theta\in\Theta$ can be re-written as \vspace{-2mm}
\begin{align}
\nonumber  u_{\theta} (y_{\theta}, z_{\theta}, y_{-\theta}) =  & P_{\theta} \left(y_{\theta} \right)  -  y_{\theta} \cdot L_p \left( \int_{\Theta} y_{\theta} \ud \mu (\theta) \right)  - y_{\theta} \cdot q\\
\label{eq.utility_PO2} & + O_{\theta} \left(z_{\theta}\right) - z_{\theta} \cdot q -p. 
\end{align}
Since we are interested in the reduction of peak time demand, we define the difference between the maximum peak-time demand and the chosen peak-time demand: \vspace{-2mm}
\begin{equation}
x_{\theta} = d_p - y_{\theta}, \quad (\theta\in\Theta).\\[-1mm]
\end{equation}
This peak-time demand \emph{reduction} includes both the unused peak-time demand and the peak-time demand shifted to off-peak time. 
For a given $x_{\theta}\in [0, d_p]$, we define the optimal shifted demand:\vspace{-2mm}
\begin{equation*}
z^{*}_{\theta} (x_{\theta}) = \argmax_{z_{\theta}\in[0, x_{\theta}]} \left[ O_{\theta}(z_{\theta}) - z_{\theta}q \right], \quad (\theta\in\Theta). \\[-1mm]
\end{equation*}
A user of type $\theta\in\Theta$ maximizing his utility~\eqref{eq.utility_PO2} will choose a couple $(x_{\theta}, z_{\theta})$ such that $z_{\theta} = \zso_{\theta}(x_{\theta})$.\footnote{A function $(x_{\cdot}, z_{\cdot})$ corresponding to social welfare maximization also satisfies $z_{\theta} = \zso_{\theta}(x_{\theta})$ for all $\theta\in\Theta$.} 
As we are interested in the reduction of congestion at peak-time, we restrict our attention to the choice of $x_{\theta}$. Note that if $q=0$, then $\zso(x_{\theta}) = x_{\theta}$. Indeed, if there is no usage-based cost, off-peak-time demand always gives higher utility than $0$.

In the absence of latency, the maximal utility of a user is \vspace{-1mm}
\begin{equation}
\label{eq.Ubar}
\bar{u}_{\theta} \!=\!    P_{\theta} \left(d_p - \xl_{\theta} \right)  \!-\!  \left(d_p-\xl_{\theta}\right) q \!+\! O_{\theta} \left(z^{*}_{\theta}(\xl_{\theta})\right) \!-\! z^{*}_{\theta}(\xl_{\theta}) q,\\[-1mm] %
\end{equation}
for all $\theta\in\Theta$,  where \vspace{-2mm}
\begin{align*}
\xl_{\theta} = \argmax_{x_{\theta}\in [0, d_p]}  \Big\{  & P_{\theta} \left(d_p - x_{\theta} \right)  -  \left(d_p-x_{\theta}\right) q \\[-2mm]
& + O_{\theta} \left(z^{*}_{\theta}(x_{\theta})\right) - z^{*}_{\theta}(x_{\theta}) q \Big\}, \quad (\theta\in\Theta), 
\end{align*}
is the baseline peak-time demand reduction which maximizes the latency-free utility. 
Latency and incentive mechanisms will only result in users using less of their peak-time demand, \ie increasing their choice of $x_{\theta}$ beyond $\xl_{\theta}$. 
Then, we define the \emph{cost of shifting} as the loss of utility incurred by a user when reducing his peak-time demand:\vspace{-2mm}
\begin{align}
\label{eq.cost}
c_{\theta} (x_{\theta})  =    \bar{u}_{\theta} - & \Big[P_{\theta} \left(d_p - x_{\theta} \right)  -  \left(d_p-x_{\theta}\right) q \\
\nonumber & \;\;\;\; + O_{\theta} \left(z^{*}_{\theta}(x_{\theta})\right) - z^{*}_{\theta}(x_{\theta}) q \;\Big], \quad (\theta\in\Theta).
\end{align}
(Note that with a slight abuse of terminology, we call $c_{\theta} (x_{\theta})$ the cost of \emph{shifting} whereas the peak-time demand reduction $x_{\theta}$ can actually correspond to shifted demand and/or to unused demand.) 
The definition of the baseline~\eqref{eq.Ubar} guarantees that $c_{\theta} (x_{\theta})$ is always positive. 
Moreover, with our assumptions on functions $P_{\theta}(\cdot)$ and $O_{\theta}(\cdot)$, the cost of shifting $c_{\theta}(\cdot)$ is differentiable and strictly convex on $[0, d_p]$; and increasing on $[\xl_{\theta}, d_p]$ (see details in Appendix~\ref{app.costconvex}). Finally, to simplify the proofs, we assume that the marginal cost of shifting is bounded by a constant independent of $\theta$. %

We view the aggregate peak-time demand reduction \vspace{-2mm}
\begin{equation}
\label{eq.defG}
G = \int_{\Theta} x_{\theta}   \ud \mu ( \theta) \\[-1mm]
\end{equation}
as a \emph{public good} to which each user contributes by his choice of $x_{\theta}$.  
Indeed, when a user reduces his peak-time demand, the benefits of reduced peak-time congestion is shared by all the users. 
We define the function \vspace{-1mm}
\begin{equation}
\label{eq.defh}
h(G) = - L_p\left(D_p - G\right),\\[-1mm]
\end{equation}
where $D_p = d_p \mu(\Theta)$
is the aggregate maximum peak-time demand. 
Function $h(\cdot)$ reflects the notion of how much users benefit from the network decongestion at peak-time. 
With our assumptions on $L_p(\cdot)$, $h(\cdot)$ is an increasing concave function of the public good level $G$. 
The term  $- y_{\theta} L_p\left(\int_{\Theta} y_{\theta} \ud \mu (\theta)\right) = (d_p-x_\theta) h(G)$ in \eqref{eq.utility_PO2} has the interpretation that the benefit a user gets from the peak-time decongestion is the product of his peak-time demand $(d_p-x_\theta)$ times the benefit per unit demand $h(G)$.
Notice that $h(G)$ is negative, but its most important characteristic is that it is increasing in $G$, \ie the disutility due to congestion reduces when $G$ increases.

In summary, in view of \eqref{eq.utility_PO2}-\eqref{eq.defh}, our peak-time decongestion model reduces to a public good provision problem similar to \cite{Loiseau11a}: the utility of a user of type $\theta\in\Theta$ is \vspace{-1mm}
\begin{equation}
\label{eq.utility}
u_{\theta}(x_{\theta}, G) = \bar{u}_{\theta} +(d_p-x_{\theta}) h (G) - c_{\theta} (x_{\theta}) - p, \\[-1mm]%
\end{equation}
where $h(\cdot)$ corresponds to the (unit) benefit from the public good and $c_{\theta}(\cdot)$ corresponds to the cost of contribution. 
From our assumptions, these functions satisfy: \\[-5.5mm]
\begin{trivlist}
\item[\bf (A1)] $h(\cdot)$ is twice differentiable, strictly concave and increasing on $[0, D_p]$;
\item[\bf (A2)] $c_{\theta}(\cdot)$ is positive, differentiable and strictly convex on $[0, d_p];$ and increasing on $[\xl_{\theta}, d_p],$ $\quad$ ($\theta\in\Theta$);
\item[\bf (A3)] $\sup_{\theta\in\Theta} \cpr_{\theta} (d_p) < \infty$, \quad $(\theta\in\Theta)$.
\end{trivlist}

\vspace{-1mm}

{\color{black} Notice that in our model,  $h(\cdot)$  does not depend on the type. All the type-dependency is carried by the cost of shifting. This modeling choice ensures tractability of the equilibrium.  }

\subsection{Incentive mechanisms}
\label{sec.mechanisms}

Individual users maximize their own utility \eqref{eq.utility}, which differs from maximizing \eqref{eq.defW}. Thus, in general, the level of public good and the aggregate user welfare achieved in the individual maximization and in the social optimum differ.

To align Nash equilibrium and social optimum objectives, the service provider can design mechanisms to incentivize users to reduce their peak-time demand. 
In this paper, we compare two different incentive mechanisms: a fixed-budget rebate mechanism (denoted $R$ or FBR) and a time-of-day pricing mechanism (denoted $T$ or TDP). 
Each mechanism introduces a reward based on the peak-time demand reduction $x_{\theta}$ below the maximum $d_p$. For the service provider to finance the respective reward, each mechanism also introduces an increase in the subscription price. However, as we will see (Corollary~\ref{cor.1}), each user's net utility can be improved even with this price increase.  
With mechanism $j\in\{R, T\}$, the user utility becomes \vspace{-2mm}%
\begin{equation}
\label{eq.utility_general}
u_{\theta}^j (x_{\theta}, G) = u_{\theta} (x_{\theta}, G) + M^{j}(x_{\theta}, G), \quad (\theta\in\Theta). \\[-1mm]
\end{equation}

The fixed-budget rebate mechanism consists in giving each user a reward proportional to his fraction of the total contribution, \ie of the functional form: \vspace{-1mm}
\begin{equation}
\label{eq.ESraffle}
M^R(x_{\theta}, G) = R \cdot \frac{x_{\theta}}{G}- \Delta p_R, \;\;  \textrm{ [fixed-budget rebate]} \\[-1mm]
\end{equation}
where $R$ is a parameter of the mechanism chosen \emph{ex-anti} by the provider. 
In practice, this mechanism could be implemented via randomization. 
For example, with a finite number of users, it could be implemented by the simplest type of lottery where each user wins the prize $R$ with a probability equal to his percentage contribution to the total amount of peak-time demand reduction.  
In this case, \eqref{eq.utility_general} and \eqref{eq.ESraffle} would correspond to expected utilities. 
Other implementations (\eg deterministic) are also  possible. 
To complete the definition of the fixed-budget rebate mechanism, we assume that if no user reduces his peak-time demand then the reward is not given. 
However, if the set of users who reduce their peak-time demand is nonempty but of measure zero, then each contributing user receives an infinite reward given in such a way that the integral w.r.t. the measure of users is $R$.  
{\color{black} This is a technical assumption for the measure-theoretic setting of the non-atomic game. In practice, it reflects the fact that if only a finite number of users contribute, their expected reward relative to their fraction of the total demand grows to infinity as the total number of users goes to infinity. }

{\color{black} We notice here that the fixed-budget rebate mechanism introduces uncertainty in the users bill as the reward depends on the amount shifted by the other users. However, this uncertainty is only one-sided: the maximum bill is known and only the reward amount is uncertain. This asymmetry is crucial to ensure good adoption of the mechanism.  }

The time-of-day pricing mechanism corresponds to a fixed reward per unit of shifted demand: \vspace{-1mm}
\begin{equation}
\label{eq.ESdifferentiated}
M^T(x_{\theta}, G) = r \cdot x_{\theta}- \Delta p_T, \;\; \textrm{ [time-of-day pricing]}\\[-1.5mm]
\end{equation}
where $r$ is a parameter of the mechanism chosen \emph{ex-anti} by the provider. 
This mechanism is a variation of a conventional time-of-day pricing mechanism, with an off-peak price subsidy. Its implementation is straightforward.

In \eqref{eq.ESraffle} and \eqref{eq.ESdifferentiated}, $\Delta p_j$ denotes the increase in the subscription price that the service provider imposes to finance the reward mechanism.
Let $\Geq$ be the equilibrium level of public good (in the next section, we show that the Nash equilibrium is unique for both mechanism). 
We assume that the price $\Delta p_j$ is fixed in advance by the service provider to compensate the reward, \ie such that $\int_{\Theta} M^j(x_{\theta}, \Geq) \ud \mu (\theta) = 0$ (note that the expression of  the aggregate welfare~\eqref{eq.defW} is thus not directly modified by the mechanisms, but only through the chosen contributions $x_{\theta}$)\footnote{If users have different maximum peak-time demand for which they are charged different subscription prices, it is also possible to impose a type-dependent price increase $\Delta p_{\theta}$ which compensate the reward, \ie such that $\int_{\Theta} M^j(x_{\theta}, \Geq) \ud \mu (\theta) = 0$ is still satisfied. }. Then, \vspace{-2mm}
\begin{equation}
\label{eq.defp}
\Delta p_R = R \cdot \frac{d_p}{D_p} \quad \textrm{ and } \quad \Delta p_T = rG^{(eq)} \cdot \frac{d_p}{D_p}. \\[-2mm]
\end{equation}
From \eqref{eq.defp}, we immediately see that the service provider has to know the equilibrium to determine the price $\Delta p_T$ for the time-of-day pricing mechanism. An error in the estimation of $\Geq$ could have important consequences. In contrast, such knowledge is not necessary for the fixed-budget rebate mechanism where $\Delta p_R$ only depends on the parameter $R$ chosen by the service provider. 

The {\color{black} marginal } utility with mechanism $j\in\{R, T\}$ is \vspace{-2mm}%
\begin{equation}
\label{eq.marginal}
\cd{u_{\theta}^j}{x_{\theta}} = - h (G) - \dpr_{\theta} (x_{\theta}) + {M^j}^{\prime}(G), \quad (\theta\in\Theta),\\[-2mm]
\end{equation}
where%
\begin{equation}
\label{eq.Sp}
{M^R}^{\prime}(G) = \frac{R}{G} \quad \textrm{ and } \quad {M^T}^{\prime}(G) = r. \\[-2mm]
\end{equation}
{\color{black} Notice that in \eqref{eq.marginal}, there is no $h^{\prime}(G)$ term corresponding to the variation of the aggregate due to the variation of a user's decision. This is because, in the non-atomic game, users are negligible and do not account for the variation of $G$ induced by their action when taking their decision. In \eqref{eq.marginal} and in all future occasions, we abuse notation by denoting with a partial derivative w.r.t. $x_{\theta}$ the marginal quantities corresponding to variations following the variation of a user's action.   }

For both mechanisms, the marginal reward ${M^j}^{\prime}$ is independent of the individual contribution $x_{\theta}$. 
Due to the term $-h(G)$ in \eqref{eq.marginal}, the marginal utility decreases when $G$ increases. Intuitively, if the congestion is lower at peak time, a user would want to use it more. Hence he would want to reduce less his peak-time demand. 
This decrease of the marginal utility is accentuated by the term ${M^R}^{\prime}(G) = R/G$ in the case of the fixed-budget reward mechanism.

\section{Analysis}
\label{sec.results1}

In this section,\footnote{Some of the first results of this section appeared in \cite{Loiseau11a} for the fixed-budget rebate mechanism. They are extended here to handle both mechanisms. } we show that, for each mechanism, there exists a unique Nash equilibrium. Then, we show that for appropriate values of the mechanisms parameters, they achieve social optimum and that for a wide range of parameters, both mechanisms are welfare improving. Finally, we compare the two mechanisms based on their sensitivity to imperfect information about the user utilities.

For clarity, we will use the following notation: $\Gamma_{R} \left( \Theta, \mu, h, \{ c_{\theta}\}_{\theta\in\Theta}, R \right)$ and $\Gamma_{T} \left( \Theta, \mu, h, \{ c_{\theta}\}_{\theta\in\Theta}, r \right)$ are the non-atomic games where users selfishly optimize their own utility \eqref{eq.utility_general} in the fixed-budget rebate mechanism and in the time-of-day pricing mechanism respectively. 
We denote with the superscript $^{(\textrm{eq})}$ the quantities at equilibrium in both games and we explicitly write their dependence on $r$ and $R$ or on other parameters whenever necessary to avoid ambiguity. Similarly, we denote with the superscript $^{*}$ the social optimum quantities  corresponding to the maximization of~\eqref{eq.defW}, and denote explicitly their dependence on parameters whenever necessary.

\subsection{Nash equilibrium existence and uniqueness}
\label{sec.NE}

We define a Nash equilibrium of the non-atomic game $\Gamma_{j}$ ($j\in\{ R, T\}$) as a function $\xeq: \Theta\to [0, d_p]$ such that for all $\theta\in\Theta$, $u_{\theta}^j(x_{\theta}, \xeq_{-\theta}) \leq u_{\theta}^j(\xeq_{\theta}, \xeq_{-\theta}), \forall x_{\theta} \in [0, d_p]$. %
Due to the strict concavity of the utility $u_{\theta}^j$, it is equivalent to $\xeq$  satisfying the first-order conditions (FOCs)\vspace{-2mm}
\begin{equation}
\label{eq.FOC_continuous}
\cd{u_{\theta}^j}{x_{\theta}} \;\;
\left\{ 
\begin{array}{l}
\le 0, \quad \forall \theta: x_{\theta} = 0, \\
= 0, \quad \forall \theta: x_{\theta} \in (0, d_p), \\
\ge 0, \quad \forall \theta: x_{\theta} = d_p, %
\end{array}
\right.\\[-3mm]
\end{equation}
where $\cd{u_{\theta}^j}{x_{\theta}}$ is given by \eqref{eq.marginal}, and satisfying \eqref{eq.defG}.

The first  theorem establishes existence and uniqueness of the Nash equilibrium for both incentive mechanisms. %
\begin{theorem}
\label{thm.Nash}
For the fixed-budget rebate mechanism, for any $R\ge 0$, there exists a unique Nash equilibrium $\xeq(R)$. %

The same result holds for the time-of-day pricing mechanism, for any $r\ge 0$. 
\end{theorem}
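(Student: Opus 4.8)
The plan is to collapse the continuum of first-order conditions \eqref{eq.FOC_continuous} into a single scalar fixed-point equation in the aggregate $G$. The decisive structural feature is that the marginal utility \eqref{eq.marginal} couples the types only through $G$: writing $\lambda(G) = {M^j}^{\prime}(G) - h(G)$ for the common ``effective marginal incentive'', the interior stationarity condition for type $\theta$ reads $\cpr_\theta(x_\theta) = \lambda(G)$. For a fixed candidate value of $G$ (hence of $\lambda$), the utility $u_\theta^j(\cdot, G)$ is strictly concave in $x_\theta$ by (A2) (the mechanism term and $(d_p-x_\theta)h(G)$ are affine, $-c_\theta$ is strictly concave), so each type has a unique best response. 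Since $\cpr_\theta$ is continuous and strictly increasing on $[0, d_p]$, this best response is the clipping
\[
\beta_\theta(\lambda) = \min\bigl\{\max\{(\cpr_\theta)^{-1}(\lambda), 0\}, d_p\bigr\},
\]
which is continuous and non-decreasing in $\lambda$ and takes values in $[0, d_p]$.

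First I would aggregate. Define $\Phi(\lambda) = \int_\Theta \beta_\theta(\lambda)\,\ud\mu(\theta)$. Because $0 \le \beta_\theta(\lambda) \le d_p$ and $\mu$ is finite, dominated convergence gives that $\Phi$ is continuous, and pointwise monotonicity of the $\beta_\theta$ gives that $\Phi$ is non-decreasing. A profile $\xeq$ satisfies \eqref{eq.FOC_continuous} together with \eqref{eq.defG} if and only if $\xeq_\theta = \beta_\theta(\lambda(\Geq))$ for every $\theta$ and the consistency condition $\Geq = \Phi(\lambda(\Geq))$ holds. Thus existence and uniqueness of the equilibrium profile reduce to existence and uniqueness of a fixed point of $F := \Phi\circ\lambda$ on $[0, D_p]$; single-valuedness of $\beta_\theta$ then recovers the full profile pointwise from $\Geq$.

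The crux is that $\lambda(\cdot)$ is non-increasing in $G$: by (A1) the term $-h$ is decreasing, while ${M^T}^{\prime} = r$ is constant and ${M^R}^{\prime} = R/G$ is decreasing (strictly, for $R>0$), using \eqref{eq.Sp}. Composing the non-decreasing $\Phi$ with the non-increasing $\lambda$, the map $F$ is non-increasing, so $\Psi(G) := G - F(G)$ is \emph{strictly} increasing (the identity term dominates) and hence has at most one zero. For existence I would apply the intermediate value theorem after checking the boundary signs: $\Psi(D_p) = D_p - F(D_p) \ge 0$ since $F \le D_p$; and $\Psi(0) = -F(0) \le 0$ for the time-of-day mechanism and for $R=0$, whereas for the fixed-budget rebate with $R>0$ the singularity $R/G \to +\infty$ as $G \to 0^+$ forces $\lambda \to +\infty$, hence $F \to D_p$, giving $\Psi(0^+) = -D_p < 0$. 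Continuity of $\Psi$ yields a unique zero $\Geq$, and $\xeq_\theta = \beta_\theta(\lambda(\Geq))$ is the unique Nash equilibrium for both mechanisms.

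I expect the main obstacle to be the behaviour of the fixed-budget rebate near $G = 0$: the factor $R/G$ is singular there, so one must argue that $\Geq$ stays bounded away from zero and, in particular, that $G=0$ is not itself an equilibrium. This is exactly where the technical convention on measure-zero contributors enters---a lone deviator from $G=0$ would capture an unboundedly large per-capita reward, so abstaining cannot be a best response. The remaining care is routine: continuity of $\Phi$ via dominated convergence, measurability of $\theta \mapsto \beta_\theta(\lambda)$ (so that $\Phi$ is well defined, using the standing measurability convention), and continuity of $\cpr_\theta$ from the differentiable strict convexity in (A2). Once the monotone structure of $F$ is established, uniqueness is immediate from the strict monotonicity of $\Psi$.
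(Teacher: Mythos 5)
Your proposal is correct and follows essentially the same route as the paper's Appendix~B: reduce the FOCs to a scalar fixed-point equation $\Gresp(G)=G$, show the aggregate best response is continuous (dominated convergence) and non-increasing while the identity is strictly increasing, and conclude uniqueness, with the clipping formula recovering the profile from $\Geq$. Your extra care with the intermediate-value boundary signs and the $R/G$ singularity at $G=0^+$ only makes explicit what the paper leaves implicit.
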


Intuitively, for a given level of public good $G$, each user of type $\theta\in\Theta$ chooses his best response contribution $\xresp_{\theta}(G)\in[0,d_p]$ to maximize his utility.
Then, integrating the contribution of each type gives the amount of public good $\Gresp(G)$ that users want to provide in response to a given $G$. An equilibrium occurs when both quantities are equal, which corresponds to solving the fixed-point equation\vspace{-1mm}
\begin{equation}
\label{eq.fixed-point}
\Gresp (G) = G.
\end{equation}

\vspace{-1mm}
Fig.~\ref{fig.case1} illustrates the two terms of the fixed-point equation for both mechanisms. 
As we mentioned, a key feature of our model is that the higher the level of public good $G$ is (\ie the lower peak-time congestion is), the fewer users are willing to reduce their peak-time demand (the marginal utility~\eqref{eq.marginal} is decreasing in $G$). 
Therefore the aggregate best response  $\Gresp(G)$ decrease when $G$ increases and this decrease is faster for the fixed-budget rebate mechanism for which the marginal utility decreases faster. 
Moreover, $\Gresp(G)$ is continuous, which leads to a unique fixed point. The continuity of $\Gresp(G)$ is due to assumption \textbf{(A2)} (a linear cost of shifting could induce discontinuities where a slight modification of $G$ would make some users switch from not reducing their peak-time demand to reducing it by $d_p$).

\begin{figure}
\centering
\begin{tabular}{cc}
\rotatebox{90}{\hspace{1.9cm} $G, \Gresp$} & \includegraphics[width=0.75\columnwidth]{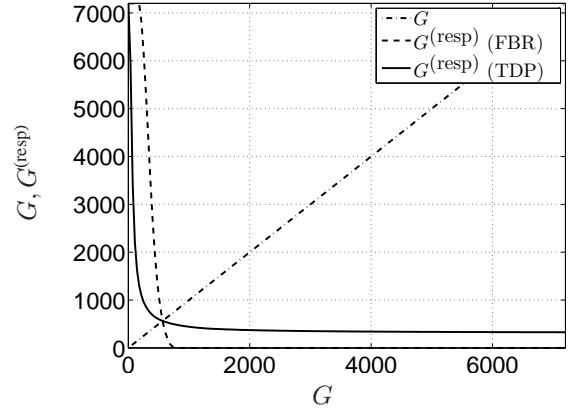}  \\ 
& $G$
\end{tabular}
\caption{Illustration of the fixed-point equation \eqref{eq.fixed-point} for Example~\ref{ex.model_use_case}. 
The dashdotted line corresponds to the amount $G$ that users shifts  (r.h.s. of \eqref{eq.fixed-point}). The dashed and solid lines correspond to the aggregate amount $\Gresp$ that users would want to shift (l.h.s. of \eqref{eq.fixed-point}) given $G$ has been shifted for the fixed-budget rebate and time-of-day pricing mechanisms respectively. To obtain  $\Geq = \Gso=78,000$, parameters were set to $R=\$5,500$ and $r=\$9$/Gbit. The corresponding subscription price increase is $\$5.5$.
\vspace{-5mm}
}
\label{fig.case1}
\end{figure}

\subsection{Social optimum}
\label{sec.SO}

We now show that the social optimum is unique and coincides with the Nash equilibrium of both mechanisms for parameters $\Rso$ and $\rso$ given in the next theorem. 
\begin{theorem}
\label{thm.SO}
The following characterizes the social optimum:
\begin{enumerate}[(i)]
\item%
There exists a function $\xso$, uniquely determined almost-everywhere,  which maximizes the aggregate welfare \eqref{eq.defW}.

\item%
For the fixed-budget rebate mechanism, we have $\xeq(R) = \xso$ almost-everywhere (and hence $\Geq(R) = \Gso$) for $R=\Rso$, where\vspace{-1mm}
\begin{subequations}
\label{eq.rRso}
\begin{equation}
\label{eq.Rso}\Rso =  \Gso \hp (\Gso) (D-\Gso).  
\end{equation}

The same result holds for the time-of-day pricing mechanism for  $r=\rso$, where
\begin{equation}
\label{eq.rso}\rso =  \hp (\Gso) (D-\Gso).  
\end{equation}
\end{subequations} 
\end{enumerate}
\end{theorem}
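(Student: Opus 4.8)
The plan is to reduce the welfare maximization to a scalar problem in the public good level $G$ and then match first-order conditions against those of the two games. First I would substitute the reduced utility~\eqref{eq.utility} into~\eqref{eq.defW} and use $\int_\Theta (d_p - x_\theta)\,\ud\mu(\theta) = D_p - G$ to write
\begin{equation*}
W = \textrm{const} + (D_p - G)\,h(G) - \int_\Theta c_\theta(x_\theta)\,\ud\mu(\theta),
\end{equation*}
where the constant collects $\int_\Theta \bar{u}_\theta\,\ud\mu$ and $-p\,\mu(\Theta)$, which do not depend on the profile $x$ (recall the mechanisms are budget-balanced, so the social optimum maximizes this $W$). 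Since $W$ depends on $x$ only through $G = \int_\Theta x_\theta\,\ud\mu$ and through the aggregate cost, I would optimize in two stages: for each feasible $G\in[0,D_p]$, first minimize $\int_\Theta c_\theta(x_\theta)\,\ud\mu$ subject to $\int_\Theta x_\theta\,\ud\mu = G$ and $x_\theta\in[0,d_p]$; then optimize the resulting scalar objective over $G$.

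For part (i), the inner problem is convex by \textbf{(A2)}, so its KKT conditions yield a multiplier $\lambda(G)$ with $\cpr_\theta(x_\theta) = \lambda(G)$ on the interior and the usual inequalities at the bounds. Because each $c_\theta$ is strictly convex, $\cpr_\theta$ is strictly increasing, so the minimizer $x_\theta = \min\{d_p, \max\{0, (\cpr_\theta)^{-1}(\lambda(G))\}\}$ is pointwise unique, with $\lambda(G)$ fixed by the constraint. Writing $C(G)$ for the optimal value, standard value-function/envelope arguments give that $C$ is convex with $C'(G) = \lambda(G)$. The reduced objective is $\hat{W}(G) = (D_p - G)\,h(G) - C(G)$; I would show its first term is strictly concave, since its second derivative $-2\hp(G) + (D_p - G)\hpp(G)$ is negative by \textbf{(A1)}, and hence $\hat{W}$ is strictly concave as (strictly concave) $-$ (convex). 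This gives a unique maximizer $\Gso$, and the associated pointwise-unique allocation is the a.e.-unique maximizer $\xso$, establishing (i).

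For part (ii), I would characterize $\xso$ by combining the two stages: the outer optimality $\hat{W}'(\Gso) = 0$ gives $\lambda^* := \lambda(\Gso) = -h(\Gso) + (D_p - \Gso)\hp(\Gso)$, while the inner KKT gives $\cpr_\theta(\xso_\theta) = \lambda^*$ on the interior, with $\cpr_\theta(0)\ge\lambda^*$ when $\xso_\theta = 0$ and $\cpr_\theta(d_p)\le\lambda^*$ when $\xso_\theta = d_p$. Comparing with the Nash first-order conditions~\eqref{eq.FOC_continuous}--\eqref{eq.marginal}, the equilibrium condition evaluated at $G = \Gso$ reads $-h(\Gso) - \cpr_\theta(x_\theta) + {M^j}^{\prime}(\Gso)$, which coincides exactly with the social-optimum characterization precisely when ${M^j}^{\prime}(\Gso) = (D_p - \Gso)\hp(\Gso)$. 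Using~\eqref{eq.Sp}, this forces $R/\Gso = (D_p-\Gso)\hp(\Gso)$ for the fixed-budget rebate mechanism and $r = (D_p - \Gso)\hp(\Gso)$ for time-of-day pricing, i.e.\ $\Rso = \Gso\,\hp(\Gso)(D_p - \Gso)$ and $\rso = \hp(\Gso)(D_p - \Gso)$. The bound inequalities also line up ($\cpr_\theta(0)\ge\lambda^*$ matches $\cd{u_\theta^j}{x_\theta}\le 0$, and $\cpr_\theta(d_p)\le\lambda^*$ matches $\cd{u_\theta^j}{x_\theta}\ge 0$), so $\xso$ satisfies all the Nash conditions at these parameter values; by the uniqueness from Theorem~\ref{thm.Nash}, $\xeq(\Rso) = \xso$ (resp.\ with $\rso$) almost everywhere, and hence $\Geq = \Gso$.

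The main obstacle is the rigorous justification, in the measure-theoretic setting, of the envelope identity $C'(G) = \lambda(G)$ and the convexity of $C$, together with the continuity and monotonicity of $G\mapsto\lambda(G)$ needed to invert the aggregate constraint; these require care because the allocation lives in an infinite-dimensional space and one must confirm measurability of the constructed $\xso$. A secondary point is the boundary of the scalar problem: I would argue that under \textbf{(A1)}--\textbf{(A3)} the maximizer $\Gso$ is interior, so that $\hat{W}'(\Gso)=0$ indeed holds, and I would check that the types sitting at the bounds $\{0,d_p\}$ are covered by the matching inequalities above rather than the interior equalities.
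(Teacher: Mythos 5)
Your proof is correct in substance but takes a genuinely different route from the paper. You decompose the welfare maximization into an inner cost-minimization at fixed $G$ (yielding a value function $C(G)$ via KKT and an envelope identity $C'(G)=\lambda(G)$) and an outer scalar maximization of $\hat W(G)=(D_p-G)h(G)-C(G)$, whose strict concavity delivers uniqueness. The paper instead works directly on the welfare functional over $\X_0$: it invokes compactness and continuity for existence, computes one-sided G\^ateaux derivatives on the three subsets $\{\xso=0\}$, $\{\xso\in(0,d_p)\}$, $\{\xso=d_p\}$ to obtain pointwise first-order conditions identical to the Nash FOCs with ${M^j}^{\prime}(G)$ replaced by $\hp(G)(D_p-G)$, and then gets uniqueness for free by reusing the fixed-point argument of Theorem~\ref{thm.Nash} (the map $G\mapsto \hp(G)(D_p-G)-h(G)$ is decreasing, so the aggregate best response is non-increasing and crosses the diagonal once). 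Your route buys a cleaner, more conceptual uniqueness argument and makes the "Pigovian" matching of marginal rewards very transparent; its price is exactly the technical debt you flag — differentiability and the envelope identity for $C$, monotonicity/invertibility of $G\mapsto\lambda(G)$, and measurability of the constructed allocation — none of which the paper needs because it never introduces the value function. Part (ii) is essentially identical in both arguments.

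One concrete caveat: your step "$\Gso$ is interior, so $\hat W'(\Gso)=0$" is not guaranteed by \textbf{(A1)}--\textbf{(A3)}; the paper explicitly entertains $\Gso=0$ (see the discussion after Proposition~\ref{prop.raffle}), in which case only $\hat W'(0)\le 0$ holds. The theorem survives because $\Rso=\Gso\hp(\Gso)(D_p-\Gso)=0$ there and the Nash inequalities at $G=0$ are implied by the (stronger) social-optimality inequalities, but your argument as written should replace the interiority claim by the corresponding one-sided condition and check that the matching of inequalities still goes through at the boundary of $[0,D_p]$, not only at the per-type bounds $\{0,d_p\}$.
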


Intuitively, this result holds because the externality faced by a user ($-h(G)+{M^j}^{\prime}$) in the game corresponding to any mechanism is independent of his type. Therefore, by fixing a reward that is also independent of the type, it is possible to achieve social optimum (similarly to a Pigovian tax \cite{Pigou32a}). %

For Example~\ref{ex.model_use_case}, Tab.~\ref{tab.1} illustrates the effect of the incentive mechanisms with the optimal parameters of Theorem~\ref{thm.SO}: they permit a $180\%$ increase of the aggregate welfare which, in our model, also correspond to a $180\%$ increase of the average utility of each user over the timescale of a month. Peak-time congestion is significantly decreased: the load is decreased by $7\%$ but the average delay drops by $90\%$. On the other hand, off-peak time decongestion is hardly increased. 

\begin{table}
\centering
\caption{Effect of the incentive mechanisms on congestion for Example~\ref{ex.model_use_case} (cf. Fig.~\ref{fig.case1}). The right column correspond to any of the two mechanism with its optimal parameters $\Rso=\$5,500$ and $\rso=\$9$/Gbit (\ie to social optimum).\vspace{-3mm} }
\label{tab.1}
\begin{tabular}{|c|c|c|}
\hline
& \multirow{3}{*}{no incentive mechanism} & incentive mechanism \\
& & with optimal parameter\\
& & (= social optimum)\\
\hline
\hline
G & 55~Gbits & 565~Gbits \\
W & 28,000 & 78,000 \\
\hline
\hline
$\rho_p$ & 0.99  & 0.92 \\
$\delta(\rho_p)$ & 130~s & 12~s \\
\hline
\hline
$\rho_o$ & 0.092 & 0.098 \\
$\delta(\rho_o)$ & 1.10~s & 1.11~s \\
\hline
\end{tabular}
\vspace{-3mm}
\end{table}

\subsection{Nash equilibrium variation with the mechanism parameters}
\label{sec.NEvar}

In this section, we investigate the variation of the equilibrium quantities when the mechanism parameters $r$ and $R$ vary. 
For ease of exposition, we first assume that the participation constraint~\eqref{eq.participation} is not imposed (we will come back to the effect of the participation constraint later in this section, see Proposition~\ref{prop.constraint}). Then, we have the following results on the variations of the equilibrium contributions. 
\begin{proposition}
\label{prop.variation_rR}
If the participation constraint~\eqref{eq.participation} is not imposed, for the fixed-budget rebate mechanism, we have: 
\begin{enumerate}[(i)]
\item  For any $\Rp>R$, $\xeq_{\theta}(\Rp)\ge \xeq_{\theta}(R)$ ($\forall \theta\in\Theta$); and the inequality is strict if  $0<\xeq_{\theta}(R)<d_p$.

\item  For any $\Rp>R$, $\Geq(\Rp)\ge \Geq(R)$; and the inequality is strict if  $0<\Geq(R)<D_p$.

\item There exists a threshold $\Rbar>\Rso$ such that,  for any $R\ge \Rbar$, $\xeq_{\theta}(R) = d_p$ for all $\theta\in\Theta$ and $\Geq(R)=D_p$.\footnote{To avoid ambiguity on the definition of the thresholds $\rbar, \Rbar$, we assume that they are the smallest possible such thresholds.}

\end{enumerate}

The same results hold for the time-of-day pricing mechanism by changing $R$ to $r$ everywhere. 
\end{proposition}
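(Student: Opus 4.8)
The plan is to work entirely with the scalar fixed-point characterization of equilibrium used in Theorem~\ref{thm.Nash}. Recall that, for a parameter value $p\in\{R,r\}$, the unique equilibrium level $\Geq(p)$ is the unique solution of $\Gresp(G;p)=G$, where $\Gresp(G;p)=\int_\Theta \xresp_\theta(G;p)\,\ud\mu(\theta)$ and, by strict concavity of $u_\theta^j$ in $x_\theta$, the individual best response is the single-valued map obtained by clipping $(\dpr_\theta)^{-1}\!\big(\psi_p(G)\big)$ to $[0,d_p]$, with the \emph{effective marginal incentive} $\psi_p(G):={M^j}^{\prime}(G)-h(G)$. From \textbf{(A2)}, $\dpr_\theta$ is strictly increasing, so $(\dpr_\theta)^{-1}$ is strictly increasing; and $\Gresp(\cdot;p)$ is continuous and non-increasing in $G$. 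Note that $\psi_p(G)$ does not depend on $\theta$, so in equilibrium all types share a common marginal-cost level.

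First I would establish monotonicity in the parameter for fixed $G$. Since ${M^R}^{\prime}(G)=R/G$ and ${M^T}^{\prime}(G)=r$ are non-decreasing in $R$ and $r$, $\psi_p(G)$ is non-decreasing in $p$ for every fixed $G$; because $(\dpr_\theta)^{-1}$ and the clipping are non-decreasing, so are $\xresp_\theta(G;p)$ and hence $\Gresp(G;p)$. This gives part~(ii) (weak): for $\Rp>R$, evaluating at $G=\Geq(R)$ yields $\Gresp(\Geq(R);\Rp)\ge\Gresp(\Geq(R);R)=\Geq(R)$; as $G\mapsto\Gresp(G;\Rp)-G$ is strictly decreasing with unique zero $\Geq(\Rp)$, nonnegativity at $\Geq(R)$ forces $\Geq(\Rp)\ge\Geq(R)$.

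For part~(i) the obstacle is that raising $R$ raises $\Geq$, and the extra congestion relief pushes the individual best response \emph{down}, opposing the direct reward effect, so I must show the net effect at equilibrium is nonnegative. I would introduce $\Xi(\lambda):=\int_\Theta \mathrm{clip}_{[0,d_p]}\big((\dpr_\theta)^{-1}(\lambda)\big)\,\ud\mu(\theta)$, continuous and non-decreasing, so that in equilibrium $\Geq(p)=\Xi(\lambda(p))$ with $\lambda(p):=\psi_p(\Geq(p))$ and each $\xeq_\theta(p)=\mathrm{clip}_{[0,d_p]}\big((\dpr_\theta)^{-1}(\lambda(p))\big)$. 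It then suffices to prove $\lambda$ non-decreasing: if $\lambda(\Rp)<\lambda(R)$ then $\Geq(\Rp)=\Xi(\lambda(\Rp))\le\Xi(\lambda(R))=\Geq(R)$, which with part~(ii) forces $\Geq(\Rp)=\Geq(R)=:G_0$; but then $\lambda(\Rp)=\Rp/G_0-h(G_0)>R/G_0-h(G_0)=\lambda(R)$, a contradiction (the identical argument, with $\lambda(p)=p-h(\Geq(p))$, handles the time-of-day mechanism). Hence $\lambda(\Rp)\ge\lambda(R)$, and monotonicity of the clipped $(\dpr_\theta)^{-1}$ gives $\xeq_\theta(\Rp)\ge\xeq_\theta(R)$.

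It remains to upgrade to strict inequalities and to produce the threshold. For part~(i), if $0<\xeq_\theta(R)<d_p$ the contribution equals the \emph{strictly} increasing $(\dpr_\theta)^{-1}(\lambda(\cdot))$, so strictness reduces to $\lambda(\Rp)>\lambda(R)$; equality would, as above, give $\Geq(\Rp)=\Geq(R)$ and then $\Rp=R$, a contradiction. For the threshold~(iii), \textbf{(A3)} gives $\bar\lambda:=\sup_\theta\dpr_\theta(d_p)<\infty$; the all-$d_p$ profile (with $G=D_p$) satisfies \eqref{eq.FOC_continuous} iff $\psi_p(D_p)\ge\bar\lambda$, so setting $\Rbar:=D_p\big(\bar\lambda+h(D_p)\big)$ (resp.\ $\rbar:=\bar\lambda+h(D_p)$) makes this profile an equilibrium for all $p\ge\bar p$, and uniqueness identifies it as \emph{the} equilibrium; since $\Geq(\Rso)=\Gso<D_p$ (full reduction by all types is not socially optimal, as $\dpr_\theta(d_p)>0$) while $\Geq(\Rbar)=D_p$, weak monotonicity forces $\Rbar>\Rso$. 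The main obstacle is the strict inequality in part~(ii): when $0<\Geq(R)<D_p$ one needs $\Xi$ to be \emph{strictly} increasing between $\lambda(R)$ and $\lambda(\Rp)$, i.e.\ a positive measure of types in the interior regime. This is exactly where strict convexity \textbf{(A2)} together with a type distribution that spreads the marginal costs $\dpr_\theta$ continuously (as in Example~\ref{ex.model_use_case}) is needed, and is the step I would treat most carefully, since with a degenerate, gapped type distribution the aggregate response can be locally flat and the strict increase of $\Geq$ can fail.
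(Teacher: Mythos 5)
Your proof is correct and rests on the same skeleton as the paper's argument: the scalar fixed-point equation $\Gresp(G)=G$, monotonicity of $\xresp_\theta(G)$ and $\Gresp(G)$ in the reward parameter for fixed $G$, and a comparison of the unique fixed points. But you go further than the paper in two useful ways. For part (i), the paper's three cases really only establish the aggregate statement $\Geq(\Rp)\ge\Geq(R)$ and leave the per-type monotonicity implicit, even though it is not immediate: raising $R$ raises $\Geq$, and the larger $G$ depresses the individual best response, so the direct reward effect and the congestion-relief effect compete. Your device of the common equilibrium marginal level $\lambda(p)=\psi_p(\Geq(p))$ together with the non-decreasing map $\Xi$, and the short contradiction showing $\lambda$ is non-decreasing in the parameter, is exactly the missing bridge; it yields (i) cleanly and also handles (i)-strict. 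Likewise your explicit thresholds $\Rbar=D_p\big(\bar\lambda+h(D_p)\big)$ and $\rbar=\bar\lambda+h(D_p)$ with $\bar\lambda=\sup_\theta\cpr_\theta(d_p)$ make concrete the paper's qualitative limit argument for (iii), and you at least attempt to justify $\Rbar>\Rso$, which the paper does not address.

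The one step you flag but do not close, strictness of $\Geq(\Rp)>\Geq(R)$ when $0<\Geq(R)<D_p$, is precisely where the paper's own proof makes an unjustified assertion: it claims the set of types with $\xresp_\theta(\Geq(R))\in(0,d_p)$ has positive measure, whereas $0<\Geq(R)<D_p$ only guarantees positive measure for $\{\xresp_\theta>0\}$ and $\{\xresp_\theta<d_p\}$ separately. Your "gapped" scenario is a genuine counterexample: with two atoms of types, one clipped at $0$ and one at $d_p$ and the common marginal level strictly inside the gap between $\cpr_B(d_p)$ and $\cpr_A(0)$, the aggregate best response is locally flat at the fixed point and $\Geq(\Rp)=\Geq(R)$ for $\Rp$ slightly above $R$ (the paper explicitly allows atomic type measures, so this is admissible). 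You have therefore not left a gap of your own making; you have located a hidden regularity assumption, namely a positive measure of interior types at equilibrium, that the strict part of (ii) needs and that the paper silently assumes.
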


Intuitively, since the marginal utility \eqref{eq.marginal} increases with the reward parameters, the equilibrium contributions of each user increases (result \textit{(i)}); and similarly for the equilibrium level of public good (result \textit{(ii)}).
The existence of the  thresholds $\Rbar$ and $\rbar$ (result \textit{(iii)}) is a consequence of assumption \textbf{(A3)} which means that reducing even the last  bit  of his peak-time demand implies a finite marginal cost for the user, which can be compensated by a large-enough reward. Clearly, a case with such a large reward will not happen in practice, nevertheless we include it here for completeness of the model analysis.

Proposition~\ref{prop.variation_rR} implies that for large enough parameters, the equilibrium level of public good will be positive. Let us define, for the fixed-budget rebate mechanism, $\Rlbar$ as the smallest parameter value such that $\Geq(R)>0$ for $R>\Rlbar$; and similarly $\rlbar$ for the time-of-day pricing mechanism. Then we have the following result characterizing these thresholds.
\begin{proposition}
\label{prop.raffle}
For  the fixed-budget rebate mechanism,  $\Rlbar=0$, \ie $\Geq(R)>0$ for any $R>0$ (if the participation constraint~\eqref{eq.participation} is not imposed). 

For the time-of-day pricing mechanism, $\rlbar\ge0$. 
\end{proposition}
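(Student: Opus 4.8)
The plan is to work from the fixed-point formulation \eqref{eq.fixed-point}, $\Gresp(G)=G$, using the structural facts established around Theorem~\ref{thm.Nash}: the aggregate best response $\Gresp(\cdot)$ is continuous and non-increasing on $[0,D_p]$, so that $\Phi(G):=\Gresp(G)-G$ is continuous and \emph{strictly} decreasing and has the unique equilibrium $\Geq$ as its only zero. Consequently, proving $\Geq>0$ reduces to showing that $\Phi$ stays strictly positive on a right-neighbourhood of the origin, i.e.\ that the aggregate best response is bounded away from $0$ for all small $G>0$. This is the common skeleton; the two mechanisms differ only through the marginal-reward term in \eqref{eq.marginal}.

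For the fixed-budget rebate mechanism I would exploit the blow-up of ${M^R}'(G)=R/G$ as $G\to 0^+$. First I would obtain uniform-in-$\theta$ bounds on the remaining two terms of the marginal utility \eqref{eq.marginal}: by convexity \textbf{(A2)} the map $c'_\theta$ is non-decreasing, so $c'_\theta(x_\theta)\le c'_\theta(d_p)$ for every $x_\theta\in[0,d_p]$, and by \textbf{(A3)} the quantity $\bar c:=\sup_{\theta\in\Theta} c'_\theta(d_p)$ is finite; moreover $-h(G)=L_p(D_p-G)\in(0,L_p(D_p)]$ is bounded on $(0,D_p)$ by \textbf{(A1)}. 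Hence, choosing $G_0\in(0,D_p)$ small enough that $R/G_0>\bar c$, for every $G\in(0,G_0]$, every $\theta\in\Theta$ and every $x_\theta\in[0,d_p]$ one has
\begin{equation*}
\cd{u_{\theta}^R}{x_{\theta}} = -h(G)-c'_\theta(x_\theta)+\frac{R}{G} > \frac{R}{G}-\bar c \ge 0 .
\end{equation*}
By the first-order condition \eqref{eq.FOC_continuous} this forces the best response $\xresp_\theta(G)=d_p$ for all $\theta$, so $\Gresp(G)=D_p$ on $(0,G_0]$.

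To close the argument I would then note that on $(0,G_0]$ we have $\Phi(G)=D_p-G\ge D_p-G_0>0$, so the unique zero of the strictly decreasing function $\Phi$ must lie in $[G_0,D_p]$; in particular $\Geq(R)\ge G_0>0$. Since $R>0$ was arbitrary (and a suitable $G_0>0$ exists for each such $R$), this yields $\Geq(R)>0$ for every $R>0$, i.e.\ $\Rlbar=0$. The part needing the most care—and the main (mild) obstacle—is making this uniform bound rigorous without circularity at $G=0$ itself, where $R/G$ is undefined: I handle it by working on the half-open interval $(0,G_0]$ and invoking continuity and monotonicity of $\Gresp$ rather than evaluating the best response at $G=0$, and by using \textbf{(A3)} to guarantee $\bar c<\infty$ uniformly in $\theta$ (the one place a type-dependent, unbounded marginal cost could break the argument).

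For the time-of-day pricing mechanism the claim $\rlbar\ge 0$ is essentially a well-definedness statement and needs no blow-up argument. By Proposition~\ref{prop.variation_rR}(ii) the map $r\mapsto\Geq(r)$ is non-decreasing, so the set $\{\,r\ge 0:\Geq(r)>0\,\}$ is an up-set, and by Proposition~\ref{prop.variation_rR}(iii) it is non-empty (it contains every $r\ge\rbar$, for which $\Geq(r)=D_p$). Its infimum $\rlbar$ therefore exists and, lying in the domain $[0,\infty)$ of the parameter $r$, satisfies $\rlbar\ge 0$. I would finally remark that this is the best one can assert in general: since ${M^T}'(G)=r$ is constant in $G$, the marginal utility \eqref{eq.marginal} admits no term diverging as $G\to 0^+$, so a positive threshold $\rlbar>0$ can genuinely occur, in contrast with the strict conclusion $\Rlbar=0$ for the fixed-budget rebate.
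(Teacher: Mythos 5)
Your proof is correct and rests on the same idea as the paper's: the paper's argument is a one-liner observing that the unit reward $R/G$ blows up as $G\to 0$, so every user wants to contribute and $G=0$ cannot be an equilibrium. You simply make this quantitative and rigorous — using \textbf{(A3)} to get the uniform bound $\bar c$ and the monotone fixed-point structure from Theorem~\ref{thm.Nash} to push the equilibrium away from the origin — and your handling of the (essentially definitional) claim $\rlbar\ge 0$ is fine.
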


The intuition behind Proposition~\ref{prop.raffle} is as follows. For the fixed-budget rebate mechanism, for any $R>0$, the marginal reward is infinite at $G=0$. All users want to contribute hence this is not an equilibrium. 
In contrast, for the time-of-day pricing mechanism, the marginal reward is constant. If it is  small enough so that the marginal utility of almost-all user types is non-positive at  $G=0$, then it is the equilibrium.

Note that Proposition~\ref{prop.raffle} holds independently of the value of $\Gso$ and is consistent with Theorem~\ref{thm.SO}. In particular, if $\Gso = 0$, then social optimum is achieved at Nash equilibrium for the fixed-budget rebate  mechanism only for $R=\Rso=0$; whereas social optimum is achieved at Nash equilibrium for the time-of-day pricing mechanism for any $r$ smaller than $\rlbar$.

The next proposition describes the evolution of the aggregate welfare with the mechanism parameters. 
\begin{proposition}
\label{prop.W}
If the participation constraint~\eqref{eq.participation} is not imposed, for the fixed-budget rebate mechanism,  the equilibrium aggregate welfare $\Weq(R)$ is increasing in $[0, \Rso]$, decreasing in $[\Rso, \bar{R}]$ and constant for $R\ge \bar{R}$. 

For the time-of-day pricing mechanism, the equilibrium aggregate welfare $\Weq(r)$ is constant on $[0, \rlbar]$. For $r\ge\rlbar$, the same results as for the fixed-budget rebate mechanism hold by changing $R$ to $r$ everywhere. 
\end{proposition}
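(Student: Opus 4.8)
The plan is to first rewrite the aggregate welfare so that its dependence on the contribution profile is transparent, and then to collapse both mechanisms onto a single one-parameter family. Since $h(G)$ is common to all types, integrating \eqref{eq.utility} against $\mu$ gives, for any profile $x=(x_\theta)_{\theta\in\Theta}$ with $G=\int_\Theta x_\theta\,\ud\mu(\theta)$, the welfare functional \eqref{eq.defW} in the form $W(x)=K+B(G)-C(x)$, where the aggregate benefit is $B(G)=h(G)(D_p-G)$, the aggregate cost is $C(x)=\int_\Theta c_\theta(x_\theta)\,\ud\mu(\theta)$, and $K$ collects the (constant) $\bar u_\theta$ and $-p$ terms; note $\Weq(\cdot)=W(\xeq(\cdot))$ since the transfers net to zero. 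From (A1) one checks $B''(G)=h''(G)(D_p-G)-2h'(G)<0$, so $B$ is strictly concave, while $B'(G)=h'(G)(D_p-G)-h(G)>0$ on $[0,D_p)$. Thus $W$ sees the profile only through the pair $(G,C(x))$.

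Next I exploit that, by the first-order conditions \eqref{eq.FOC_continuous}--\eqref{eq.Sp}, every interior type at the equilibrium of either mechanism satisfies $c_\theta'(\xeq_\theta)=s$ with the \emph{common marginal incentive} $s=M^{j\prime}(\Geq)-h(\Geq)$ (equal to $R/\Geq-h(\Geq)$ for FBR and $r-h(\Geq)$ for TDP). Hence, for a given $s$, the equilibrium profile is $\xeq_\theta=\xi_\theta(s):=\mathrm{median}\{0,(c_\theta')^{-1}(s),d_p\}$ and $\Geq=\mathcal G(s):=\int_\Theta \xi_\theta(s)\,\ud\mu(\theta)$, \emph{independently of the mechanism} (this uses strict convexity of $c_\theta$ from (A2) to invert $c_\theta'$). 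Consequently the equilibrium welfare, viewed as a function of $s$, is the same function $\tilde W(s)=K+B(\mathcal G(s))-\int_\Theta c_\theta(\xi_\theta(s))\,\ud\mu(\theta)$ for both mechanisms, and by Theorem~\ref{thm.SO} the social optimum corresponds to $s=s^{*}:=B'(\Gso)$, i.e. $\mathcal G(s^{*})=\Gso$.

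I then show $\tilde W$ is increasing for $s\le s^{*}$ and decreasing for $s\ge s^{*}$. Heuristically $\tilde W'(s)=\bigl(B'(\mathcal G(s))-s\bigr)\mathcal G'(s)$: the cost terms collapse because $c_\theta'(\xi_\theta(s))=s$ on the interior (moving) types while boundary types have $\xi_\theta'(s)=0$, so $\int_\Theta c_\theta'(\xi_\theta)\xi_\theta'\,\ud\mu=s\,\mathcal G'(s)$; since $\mathcal G'\ge 0$, $B'$ is strictly decreasing and $\mathcal G$ increasing, the bracket is strictly decreasing and vanishes at $s^{*}$, giving the claimed sign. To avoid differentiating aggregates under the integral I would instead argue without derivatives: for $s_1<s_2\le s^{*}$, convexity of each $c_\theta$ gives $\int_\Theta[c_\theta(\xi_\theta(s_2))-c_\theta(\xi_\theta(s_1))]\,\ud\mu\le s_2\,(\mathcal G(s_2)-\mathcal G(s_1))$, while concavity of $B$ with $\mathcal G(s_2)\le\Gso$ and $B'(\Gso)=s^{*}\ge s_2$ gives $B(\mathcal G(s_2))-B(\mathcal G(s_1))\ge s_2\,(\mathcal G(s_2)-\mathcal G(s_1))$, whence $\tilde W(s_2)\ge\tilde W(s_1)$; the range $s^{*}\le s_1<s_2$ is symmetric.

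Finally I transfer this unimodality back to the parameters. By Proposition~\ref{prop.variation_rR}, $\Geq$ is nondecreasing in the parameter and strictly increasing while $0<\Geq<D_p$; since $\mathcal G$ is strictly increasing there, $s=\mathcal G^{-1}(\Geq)$ is a strictly increasing function of the parameter, equal to $s^{*}$ exactly at $R=\Rso$ (resp.\ $r=\rso$). Composing with the unimodality of $\tilde W$ yields that $\Weq$ increases up to the social-optimum parameter and decreases beyond it on the interior range. The flat pieces are the boundary regimes: for $R\ge\Rbar$ (resp.\ $r\ge\rbar$), Proposition~\ref{prop.variation_rR}(iii) forces $\xeq_\theta\equiv d_p$, so the profile and hence $\Weq$ are constant; for TDP with $r\le\rlbar$, Proposition~\ref{prop.raffle} forces $\xeq_\theta\equiv 0$, giving $\Weq$ constant on $[0,\rlbar]$, whereas $\Rlbar=0$ for FBR so the increase starts at $R=0$ with no flat piece. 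The main obstacle is the careful handling of the boundary types in the middle step—ensuring the cost terms telescope and that $\mathcal G$ and the parameter-to-$s$ map are monotone and line up exactly with the thresholds $\Rso,\Rbar$ and $\rlbar,\rso,\rbar$—together with the measure-theoretic justification, which the derivative-free comparison is designed to sidestep.
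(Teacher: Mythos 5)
Your proof is correct, but it takes a genuinely different route from the paper's. The paper differentiates the welfare functional along the equilibrium path: it takes the G\^ateaux derivative of $W$ in the direction $y_\theta=\dd{\xeq_{\theta}(R)}{R}$ and determines its sign by comparing the equilibrium first-order conditions with the social-optimum conditions \eqref{eq.nec1}--\eqref{eq.nec3} on the partition $\Bp_1,\Bp_2,\Bp_3$, using concavity of $h(G)(D_p-G)$ and monotonicity of $\cpr_{\theta}$. You instead collapse both mechanisms (and the social optimum) onto the single family $\xi_\theta(s)$ indexed by the common marginal incentive $s={M^j}^{\prime}(\Geq)-h(\Geq)$, write $W=K+B(G)-C(x)$ with $B(G)=h(G)(D_p-G)$, and prove unimodality of $\tilde W(s)$ at $s^{*}=B'(\Gso)$ by a derivative-free sandwich: convexity of $c_\theta$ bounds the cost increment by $s_2\,(\mathcal G(s_2)-\mathcal G(s_1))$ while concavity of $B$ bounds the benefit increment from below by the same quantity when $s_2\le s^{*}$ (and symmetrically above $s^{*}$). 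Your route buys two things: it avoids assuming differentiability of $R\mapsto\xeq(R)$ (which the paper's choice of direction $y$ implicitly requires), and it handles both mechanisms and the flat regimes at $\rlbar$ and $\Rbar,\rbar$ in one unified argument; the paper's version is shorter because it reuses the machinery already built for Theorem~\ref{thm.SO}. Two small points you should make explicit: (a) the boundary cases of the cost inequality need $\cpr_\theta(d_p)\le s_2$ when $\xi_\theta(s_2)=d_p$ and $\cpr_\theta(0)\ge s_1$ when $\xi_\theta(s_1)=0$, which is exactly the equilibrium FOC \eqref{eq.FOC_continuous} at the boundary; and (b) the map $R\mapsto s$ via $\mathcal G^{-1}$ is only well defined where $\mathcal G$ is injective, but where it is not, two values of $s$ with equal $\mathcal G$ yield a.e.\ the same profile (since $\xi_\theta$ is nondecreasing in $s$) and hence the same welfare, so the composition with the monotonicity from Proposition~\ref{prop.variation_rR} still goes through.
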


Proposition~\ref{prop.W}, illustrated on Fig.~\ref{fig.WrR} shows that the welfare is unimodal. If $\Gso>0$, it increases to its only maximum at $\Rso$ or $\rso$ and then decreases. If $\Gso=0$ (hence $\Rso=0$ and $\rso=\rlbar$), the welfare is maximal at $R=0$ or $r=0$ (\ie with no incentive mechanism) and it only decreases (after a constant phase for the time-of-day pricing mechanism).

In extreme cases where the reward parameter is too large, the equilibrium aggregate welfare may become negative. 
For instance, consider a case where the usage-based price $q$ is so high compared to the off-peak time utility $O_{\theta}(\cdot)$ that all users have zero off-peak time demand. If the reward is larger than $\Rbar$ or $\rbar$, then users would not use the service at all and the aggregate welfare would be $-p\mu(\Theta)<0$. In that case, the participation constraint is not satisfied, hence users will simply not buy the service. The next proposition, which is easily derived using the monotonicity of Proposition~\ref{prop.W}, describes how the previous results are changed when introducing the participation constraint. 
\begin{proposition}
\label{prop.constraint}
If the participation constraint~\eqref{eq.participation} is imposed, for the fixed-budget rebate mechanism, there exists a threshold $\Rmax\in(\Rso, \infty]$ such that 
\begin{enumerate}[(i)]
\item For all $R<\Rmax$, all the users buys the monthly subscription and the results of Proposition~\ref{prop.variation_rR}, Proposition~\ref{prop.raffle} and Proposition~\ref{prop.W} hold. 

\item For all $R>\Rmax$, no user buys the monthly subscription, hence the welfare is zero.
\end{enumerate}

The same results hold for the time-of-day pricing mechanism by changing $R$ to $r$ everywhere. 
\end{proposition}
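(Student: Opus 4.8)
The plan is to reduce the statement to an elementary fact about the superlevel set $\{R\ge 0 : \Weq(R)>0\}$, exploiting that participation is a \emph{common} all-or-nothing decision. By the homogeneity-in-average assumption of Sec.~\ref{sec.measure}, the expected monthly utility of every user equals $\Weq/\mu(\Theta)$, so for a fixed $R$ either all users find it profitable to subscribe (when $\Weq(R)>0$) or none does (when $\Weq(R)\le 0$). Consequently, once \eqref{eq.participation} is imposed, for every $R$ with $\Weq(R)>0$ the all-subscribe profile remains the equilibrium and the unconstrained analysis of Propositions~\ref{prop.variation_rR}--\ref{prop.W} applies verbatim, whereas for every $R$ with $\Weq(R)\le 0$ the no-subscribe profile is the equilibrium and the aggregate welfare is $0$. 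Thus the proof amounts to showing that $\{R:\Weq(R)>0\}$ is an interval of the form $[0,\Rmax)$ with $\Rmax\in(\Rso,\infty]$.

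To establish this, I would first recall the standing assumption of Sec.~\ref{sec.measure} that the participation constraint holds without any mechanism, i.e. $\Weq(0)>0$. By Proposition~\ref{prop.W}, $\Weq(\cdot)$ is increasing on $[0,\Rso]$, decreasing on $[\Rso,\Rbar]$, and constant on $[\Rbar,\infty)$; in particular $\Weq(\Rso)=\max_R\Weq(R)\ge \Weq(0)>0$. I would then set $\Rmax:=\sup\{R\ge 0:\Weq(R)>0\}$ and use this unimodal shape to show the set is exactly $[0,\Rmax)$. For $R\le\Rso$ positivity is immediate since $\Weq(R)\ge\Weq(0)>0$. For $\Rso<R<\Rmax$, the definition of the supremum yields a witness $R'\in(R,\Rmax]$ with $\Weq(R')>0$, and monotonicity on $[\Rso,\infty)$ gives $\Weq(R)\ge \Weq(R')>0$. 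Conversely, for $R>\Rmax$ the defining supremum forces $\Weq(R)\le 0$. This delivers parts (i) and (ii) directly.

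It remains to obtain the strict inequality $\Rmax>\Rso$ and, when $\Rmax<\infty$, the uniqueness of the zero-crossing. For this I would invoke continuity of $R\mapsto\Weq(R)$, which follows from the continuity of the equilibrium map $R\mapsto\xeq(R)$ used in the preceding proofs together with continuity of the welfare functional \eqref{eq.defW}. Since $\Weq(\Rso)>0$ and $\Weq$ is continuous, $\Weq$ remains positive on a right-neighborhood of $\Rso$, whence $\Rmax>\Rso$; and since $\Weq$ is strictly decreasing on $[\Rso,\Rbar]$ before becoming constant, it can vanish at most once, so the threshold is well defined, with $\Rmax=\infty$ precisely when the terminal constant value $\Weq(\Rbar)$ is positive. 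The time-of-day pricing case is identical with $r$ in place of $R$; the only new feature is the additional constant phase of $\Weq(r)$ on $[0,\rlbar]$ from Proposition~\ref{prop.W}, but there $\Weq$ equals its positive no-mechanism value, so this phase lies strictly inside the positivity interval and leaves the argument unchanged.

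The main obstacle is conceptual rather than computational: the entire statement hinges on recognizing that subscription is a common binary decision, so the constrained equilibrium is dictated solely by the sign of the unconstrained welfare $\Weq(R)$. Once this reduction is made, the result is an immediate consequence of the unimodality proved in Proposition~\ref{prop.W}, and the only technical care required is the continuity of $\Weq(\cdot)$ needed to pin down $\Rmax>\Rso$ and the uniqueness of the crossing.
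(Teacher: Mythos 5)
Your proposal is correct and follows essentially the same route as the paper, which gives no separate appendix proof but argues exactly this in the surrounding text: homogeneity of the population makes subscription a common binary decision governed by the sign of $\Weq$, and the unimodality from Proposition~\ref{prop.W} together with $\Weq(0)>0$ yields the threshold $\Rmax>\Rso$. Your write-up is in fact slightly more careful than the paper's sketch (e.g., invoking continuity of $R\mapsto\Weq(R)$ to get the strict inequality $\Rmax>\Rso$), but there is no substantive difference in approach.
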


The effect of the participation constraint is simple: below a threshold $\Rmax$ or $\rmax$, all the users participate and above this threshold, no users participate. This is due to our assumption that the population is homogeneous at the timescale of a month. Since users are offered a monthly subscription, they will buy it if they expected utility over the month is positive which is equivalent to the aggregate welfare being positive. Due to our assumption that the welfare is positive without any incentive mechanism, we have $\Rmax>\Rso$, \ie the welfare is positive for any $R\le\Rso$. The threshold $\Rmax$ can even be infinite if the off-peak time utility is high enough and the usage-based price is small enough so that users have positive utility over the month even without using the peak time.

The last result, which is a direct consequence of the previous results of this section, shows that both mechanisms are welfare improving for a wide range of parameters. 
\begin{corollary}
\label{cor.1}
If $\Gso>0$, the fixed-budget rebate mechanism is strictly welfare improving for any parameter $R$ in a range $(0, R_0)$ where $R_0\in(\Rso, \Rmax]$: \vspace{-1mm}
\begin{equation*}
\Weq(R) > \Weq(0), \quad \forall R \in (0, R_0).\\[-1mm]
\end{equation*}

The same results hold for the time-of-day pricing mechanism by changing $R$ to $r$ everywhere, and $0$ to $\rbar$. 
\end{corollary}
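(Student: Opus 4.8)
The plan is to read the corollary off the unimodal shape of the map $R\mapsto\Weq(R)$ established in Proposition~\ref{prop.W}, combined with the participation-constraint structure of Proposition~\ref{prop.constraint}, the positivity of the no-incentive welfare, and the continuity of $\Weq(\cdot)$. I treat the fixed-budget rebate mechanism in detail; the time-of-day case is identical once the initial constant phase on $[0,\rlbar]$ is accounted for.

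First I would record the two elementary facts that anchor the argument. Since $\Gso>0$, the expression~\eqref{eq.Rso} of Theorem~\ref{thm.SO} gives $\Rso=\Gso\hp(\Gso)(D-\Gso)>0$, all factors being positive for an interior optimum $\Gso\in(0,D)$ (with $\hp>0$ by \textbf{(A1)}). Moreover the standing participation assumption~\eqref{eq.participation} gives $\Weq(0)=W>0$. With the participation constraint imposed, Proposition~\ref{prop.constraint}(i) ensures that on $[0,\Rmax)$ the welfare curve is exactly the one of Proposition~\ref{prop.W} — increasing on $[0,\Rso]$, decreasing on $[\Rso,\Rbar]$, constant for $R\ge\Rbar$ — while Proposition~\ref{prop.constraint}(ii) gives $\Weq(R)=0$ for $R>\Rmax$.

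Next I would isolate the strict improvement on $(0,\Rso]$ and then extend it past $\Rso$. The increase on $(0,\Rso]$ is in fact strict: by Proposition~\ref{prop.raffle} ($\Rlbar=0$) and Proposition~\ref{prop.variation_rR}(iii) ($\Geq(R)=D_p$ only for $R\ge\Rbar>\Rso$), one has $0<\Geq(R)<D_p$ for $R\in(0,\Rso]$, so Proposition~\ref{prop.variation_rR}(ii) makes $\Geq$ strictly increasing there, and $\Weq$ strictly increases as $\Geq$ climbs toward its optimal value $\Gso=\Geq(\Rso)$. Hence $\Weq(R)>\Weq(0)$ for all $R\in(0,\Rso]$, and in particular $\Weq(\Rso)=\Wso>\Weq(0)$. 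Using continuity of $\Weq(\cdot)$ at $\Rso$ together with this strict inequality, the strict improvement persists on a right-neighborhood of $\Rso$; since $\Weq$ is weakly decreasing then constant on $[\Rso,\Rmax)$ and drops to $0<\Weq(0)$ beyond $\Rmax$, the set $\{R\ge0:\Weq(R)>\Weq(0)\}$ is an interval. I would then take $R_0$ to be its right endpoint, which satisfies $R_0\in(\Rso,\Rmax]$ — strictly above $\Rso$ by the neighborhood argument and at most $\Rmax$ since $\Weq=0$ there — yielding $\Weq(R)>\Weq(0)$ for all $R\in(0,R_0)$. For the time-of-day mechanism the only change is that, by the constant phase of Proposition~\ref{prop.W} on $[0,\rlbar]$, the strict improvement cannot begin before $\rlbar$, so the interval becomes $(\rlbar,r_0)$.

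The main obstacle I anticipate is justifying the continuity of $R\mapsto\Weq(R)$, which is what turns the monotone profile of Proposition~\ref{prop.W} into a genuine open improvement interval containing $(0,\Rso]$ and extending strictly past $\Rso$. This should follow from continuous dependence of the equilibrium $\xeq(R)$, and hence of $\Geq(R)$, on $R$ — itself a consequence of the fixed-point characterization~\eqref{eq.fixed-point} and the strict convexity of \textbf{(A2)} underlying Theorem~\ref{thm.Nash} — after which $\Weq(R)=\int_{\Theta}u_{\theta}(\xeq_{\theta}(R),\Geq(R))\,\ud\mu(\theta)$ is continuous by dominated convergence, with the integrand controlled uniformly via \textbf{(A3)}. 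Everything else is routine bookkeeping on the unimodal welfare profile.
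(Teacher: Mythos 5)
Your proposal is correct and follows exactly the route the paper intends: the paper gives no standalone proof of this corollary, describing it only as ``a direct consequence of the previous results of this section,'' and your derivation from Propositions~\ref{prop.variation_rR}--\ref{prop.constraint} (strict increase of $\Weq$ on $(0,\Rso]$ via the strict monotonicity of $\Geq$, then continuity plus the unimodal profile to push $R_0$ strictly past $\Rso$ and cap it at $\Rmax$) is precisely that argument made explicit, including the correct reading of the time-of-day case as starting at $\rlbar$. The only point you assert rather than fully justify --- that the weak monotonicity of Proposition~\ref{prop.W} is in fact strict on $(0,\Rso]$ --- can be closed by noting that in the directional-derivative computation of Appendix~\ref{app.proof_propW} the bracket $\hp(G)(D_p-G)-h(G)-\cpr_{\theta}(x_{\theta})$ is strictly positive on $\Bp_2$ for $G<\Gso$ (strict concavity of $G\mapsto h(G)(D_p-G)$) while $y_{\theta}>0$ on a set of positive measure, so this is a presentational rather than a substantive gap.
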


This result is important as it shows that, by implementing an incentive mechanism with a parameter lying in a wide range around an optimal parameter, the provider can increase welfare. Fig.~\ref{fig.WrR} shows that for Example~\ref{ex.model_use_case}, the time-of-day pricing mechanism with any parameter in $(0, 2\rso)$ is welfare improving, and the fixed-budget rebate mechanism with any parameter in $(0, 14\Rso)$ is welfare improving
However, a consequence of Proposition~\ref{prop.W} is that both mechanisms can ``overshoot'': if $R$ or $r$ is too large (larger that $\Rso$ or $\rso$), $\Geq$ can be larger than $\Gso$ and the aggregate user welfare is suboptimal.
In a competitive environment, a provider would not intentionally choose an overshooting parameter because it would be a competitive disadvantage as compared to a provider choosing an optimal parameter. 
However, if the provider has imperfect information about user utilities, it may overshoot unintentionally. Fig.~\ref{fig.WrR} suggests that in this case, the aggregate welfare remains higher for the fixed-budget rebate mechanism than for the time-of-day pricing mechanism. In the next section, we investigate in details the robustness of each mechanism to imperfect information about user utilities.

\begin{figure}
\begin{tabular}{ccc}
\rotatebox{90}{\hspace{5.4cm} \rotatebox{-90}{(a)}} & \rotatebox{90}{\hspace{1.8cm} $\Weq(R)$} \hspace{-0.4cm} & \includegraphics[width=0.75\columnwidth]{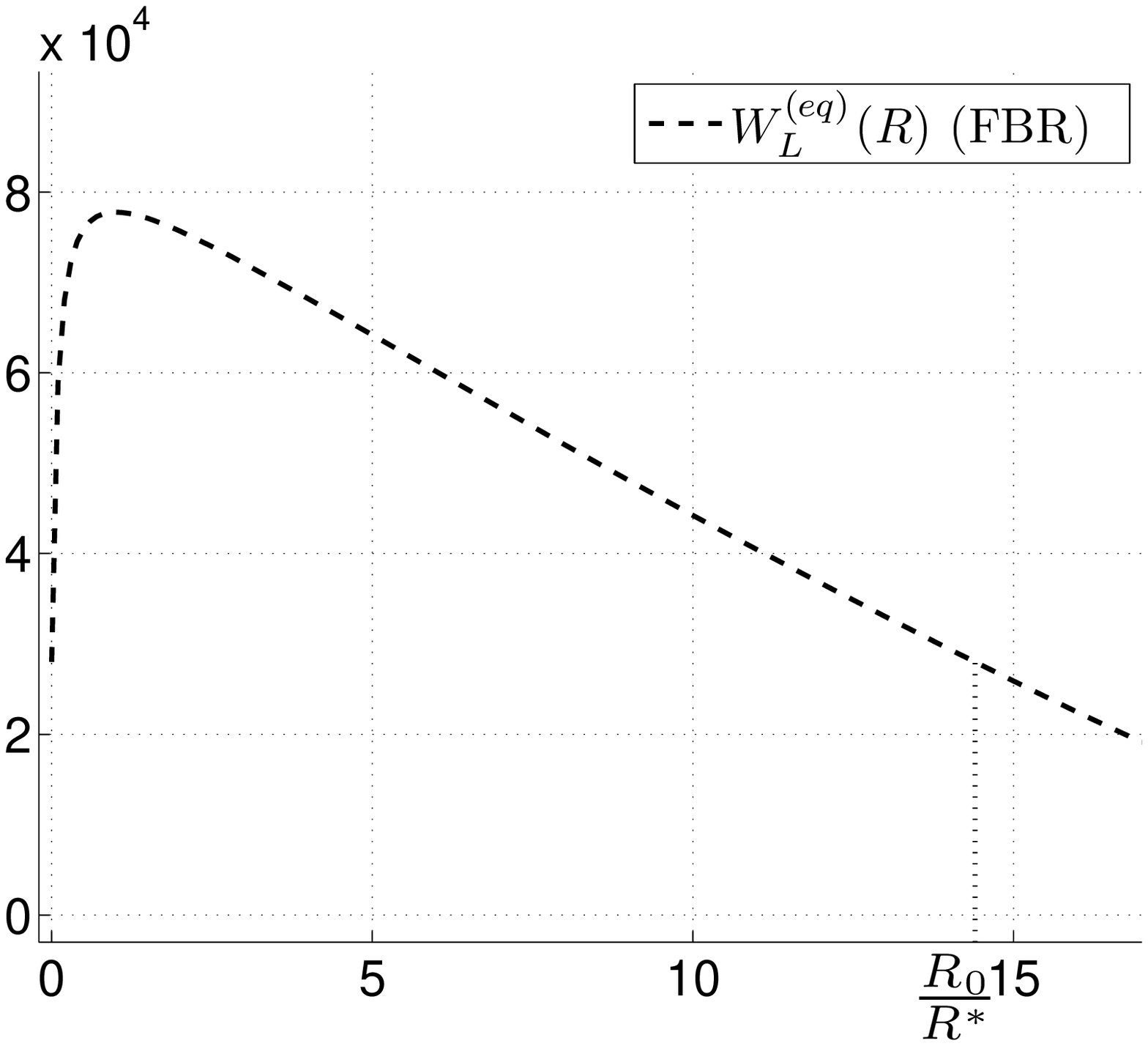} \\[-3mm]
& & $R/\Rso$ \\
&  \\[-4mm]
\rotatebox{90}{\hspace{5.4cm} \rotatebox{-90}{(b)}} & \rotatebox{90}{\hspace{1.8cm} $\Weq(r)$} \hspace{-0.4cm} & \includegraphics[width=0.75\columnwidth]{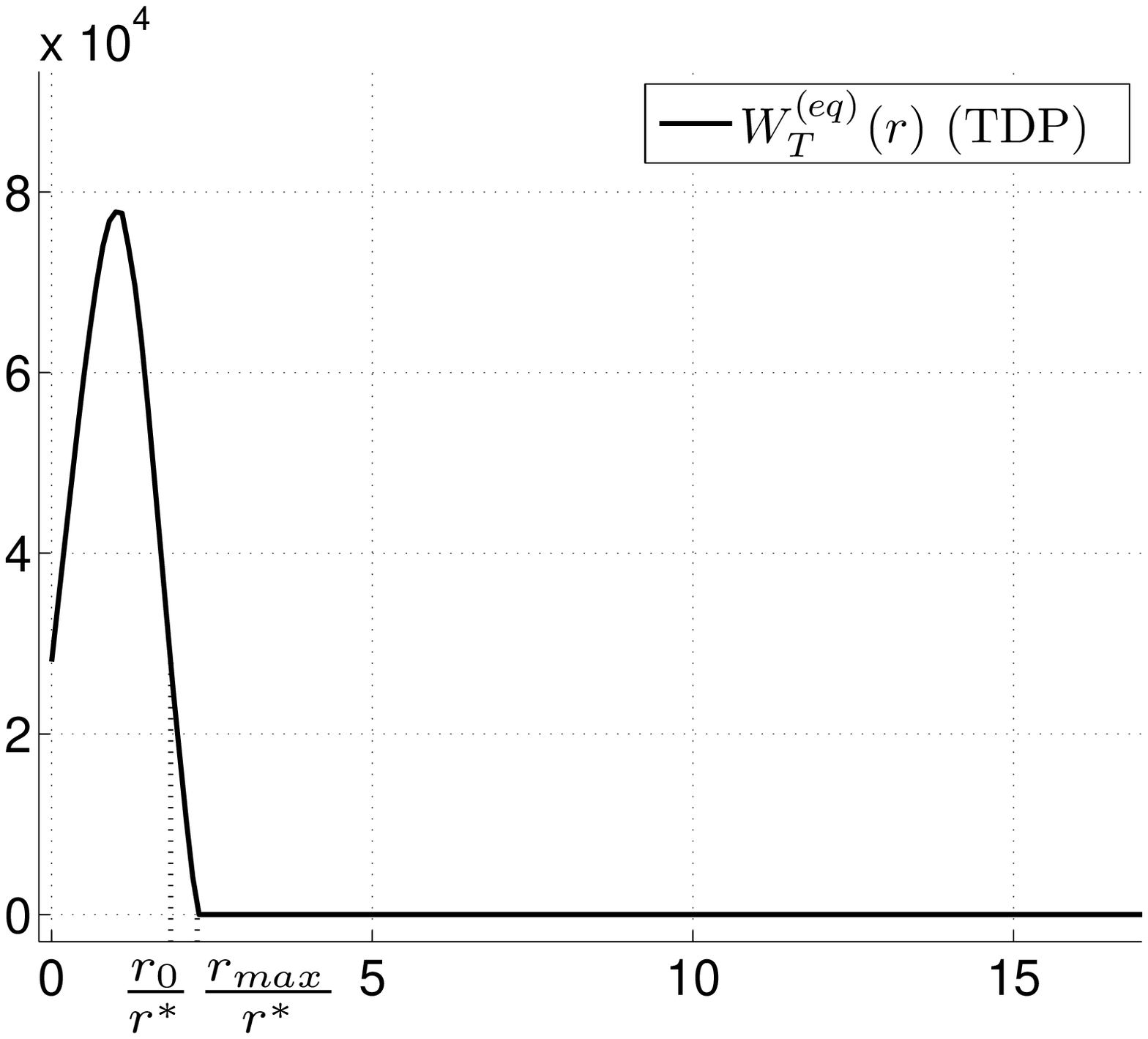} \\[-3mm]
& & $r/\rso$ 
\end{tabular}
\caption{Variation of the equilibrium aggregate welfare with the reward parameter for Example~\ref{ex.model_use_case}: (a) for the fixed-budget rebate mechanism -- (b) for the time-of-day pricing mechanism. }
\vspace{-5.5mm}
\label{fig.WrR}
\end{figure}

\subsection{Comparison of the two incentive mechanisms}
\label{sec.results2}

In this section, we compare the sensitivity of the two incentive mechanisms to imperfect information about user utilities. 
Let the games $\Gamma_{R} \left( \Theta, \mu, h, \{ c_{\theta}\}_{\theta\in\Theta}, \Rso \right)$ and $\Gamma_{T} \left( \Theta, \mu, h, \{ c_{\theta}\}_{\theta\in\Theta}, \rso \right)$ correspond to the baseline case of perfect information considered in the previous sections and suppose that $\Rso$ and $\rso$ have been chosen according to \eqref{eq.rRso} to induce a socially optimal level of public good at equilibrium (\ie $\Geq=\Gso$). We assume that $\Gso\in (0, D_p)$. 
We analyze the variations in equilibrium and in social optimum when $\Rso$ and $\rso$ are maintained for the respective mechanisms and utilities are perturbed (\ie actual utilities are different from the estimation used by the provider to set the parameters).

We restrict our analysis to the case where only the cost of shifting is perturbed and the rest of the utilities is unchanged. Indeed, we argue that it is more difficult to obtain data on the time preferences (the willingness to move demand from peak time to off-peak time) than on the total demand or on the sensitivity to delay. Therefore, the cost of shifting is more likely to be imperfectly estimated by the provider. We consider the following general form of the perturbed cost of shifting: \vspace{-1.5mm}
\begin{equation}
\label{eq.def_cti}
\cti_{\theta}(\cdot) = c_{\theta}(\cdot) + \epsilon \cdot p_{\theta}(\cdot),\\[-1mm]
\end{equation}
where $\epsilon$ is a real number and $p_{\theta}: [0, d_p] \to \R$ is a continuously differentiable function satisfying \vspace{-2mm}
\begin{equation*}
\sup_{\theta\in\Theta} \sup_{x\in[0, d_p]} |p^{\prime}_{\theta}(x)| < \infty. \\[-1.5mm]
\end{equation*}
Parameter $\epsilon$ is the perturbation magnitude and function $p_{\theta} (\cdot)$ is the direction of the perturbation. 
For the analysis, we restrict to small perturbations, \ie $\left|\epsilon\right|$ small. For $\left|\epsilon\right|$ small enough, the perturbed functions $\cti_{\theta}(\cdot)$ satisfy assumption {\bf (A2-3)}.
We assume that the perturbation direction is such that the aggregate best response has a non-zero perturbation at the order one in $\epsilon$ at the point $\Gso$ (the non-perturbed equilibrium). Otherwise, the equilibrium point would not be changed by the perturbation. 
For numerical illustrations, we will use the following simple perturbation which satisfies the above conditions: $p_{\theta}(\cdot) = c_{\theta}(\cdot)$ for all types $\theta\in\Theta$, \ie $c_{\theta}(\cdot)$ is scaled by a factor $1+\epsilon$ independent of the type.

Let $\Geq_R(\epsilon)$ and $\Geq_T(\epsilon)$ be  the equilibrium levels of public good in the games with perturbed utilities $\Gamma_{R} \left( \Theta,\mu,  h, \{ \cti_{\theta}\}_{\theta\in\Theta}, \Rso \right)$ and $\Gamma_{T} \left( \Theta,\mu,  h,  \{ \cti_{\theta}\}_{\theta\in\Theta}, \rso \right)$, respectively. Let $\Weq_R(\epsilon)$ and $\Weq_T(\epsilon)$ be the corresponding equilibrium welfares. 
Let $\Gso(\epsilon)$ and $\Wso(\epsilon)$ be the socially optimal level of public good with perturbed utilities, and the corresponding welfare resulting from the maximization of \eqref{eq.defW} where $c_{\theta}(\cdot)$ is replaced by $\cti_{\theta}(\cdot)$. 
To evaluate the variation of $\Geq$ with the perturbation, we need to evaluate the variation of the aggregate best response $\Gresp$ (recall that $\Geq$ is the  fixed-point of $\Gresp(\cdot)$, see \eqref{eq.fixed-point}). (The variation of $\Gso$ is handled similarly since from Theorem~\ref{thm.SO}, the social optimum can also be seen as a Nash equilibrium in a mechanism $SO$ with unit reward given by \eqref{eq.unit_r_SO}.) For this purpose, we introduce, for each mechanism $j\in\{R, T, SO\}$, the quantity $\alpha_{j}$ equal to the opposite of the slope of $\Gresp(\cdot)$ at the common non-perturbed equilibrium point $\Gso=\Gso(0)$ {\color{black}(see \eqref{eq.defalpha})}. We define the following conditions: \vspace{-1mm}
\begin{trivlist}
\item[\textbf{(C1)}] $\displaystyle \left|    \frac{1}{1+\alpha_{R}} - \frac{1}{1+\alpha_{SO}}   \right| < \left| \frac{1}{1+\alpha_{T}} - \frac{1}{1+\alpha_{SO}}   \right|$,\\[1pt]
\item[\textbf{(C2)}] $\displaystyle \left|    \frac{1}{1+\alpha_{R}} - \frac{1}{1+\alpha_{SO}}   \right| > \left| \frac{1}{1+\alpha_{T}} - \frac{1}{1+\alpha_{SO}}   \right|$.
\end{trivlist}
If the slopes $\alpha_{j}$'s for the different mechanisms are close enough, these conditions reduce to the following more intuitive  conditions (see details in Appendix~\ref{app.reduction}): \vspace{-1mm}
\begin{trivlist}
\item[\textbf{(C1$^\prime$)}] $\displaystyle \left|    r_R^{\prime}(G) - r_{SO}^{\prime}(G)   \right| < \left| r_T^{\prime}(G) - r_{SO}^{\prime}(G)   \right|$, at $G= \Gso(0)$, 
\item[\textbf{(C2$^\prime$)}] $\displaystyle \left|    r_R^{\prime}(G) - r_{SO}^{\prime}(G)   \right| > \left| r_T^{\prime}(G) - r_{SO}^{\prime}(G)   \right|$, at $G= \Gso(0)$, 
\end{trivlist}\vspace{-1mm}
where $r_R^{\prime}$,  $r_T^{\prime}$,  $r_{SO}^{\prime}$ are the respective derivatives of the unit rewards \vspace{-1mm}
\begin{subequations}
\label{eq.unit_r}
\begin{align}
r_R (G) &= \frac{R}{G}, \\[-.5mm]
r_T (G) &= r, \\
\label{eq.unit_r_SO} r_{SO} (G) &= \hp(G)(D-G).\\[-5.5mm] \nonumber
\end{align}
\end{subequations}

Then we have the following results.\vspace{-2mm}
\begin{proposition}%
\label{prop.robustness_h}
{\color{black} There exists $\epsilon_{m}>0$ such that, for any perturbation \eqref{eq.def_cti} with $\epsilon\neq0$ and $|\epsilon|<\epsilon_{m}$,}\vspace{-1mm}
\begin{trivlist}
\item[(i)] if condition \textbf{(C1)} is satisfied, then\vspace{-1mm}
$$\left| \Geq_R(\epsilon) - \Gso(\epsilon) \right| < \left| \Geq_T(\epsilon) - \Gso(\epsilon) \right|;\vspace{-1mm}$$
\item[(ii)] if condition \textbf{(C2)} is satisfied, then\vspace{-1mm}
$$\left| \Geq_R(\epsilon) - \Gso(\epsilon) \right| > \left| \Geq_T(\epsilon) - \Gso(\epsilon) \right|.\vspace{-1mm}$$
\end{trivlist}
\end{proposition}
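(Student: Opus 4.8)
The plan is to perform a first-order perturbation analysis in $\epsilon$ about the common base point $\Gso=\Gso(0)$. The starting observation is that each of $\Geq_R(\epsilon)$, $\Geq_T(\epsilon)$ and $\Gso(\epsilon)$ is a fixed point of an aggregate best-response map. For $j\in\{R,T\}$, $\Geq_j(\epsilon)$ solves $\Gresp_j(G,\epsilon)=G$, where $\Gresp_j(G,\epsilon)=\int_\Theta\xresp_\theta(G,\epsilon)\,\ud\mu(\theta)$ and $\xresp_\theta(G,\epsilon)$ is the best response of type $\theta$ under the perturbed cost $\cti_\theta=c_\theta+\epsilon p_\theta$; and by Theorem~\ref{thm.SO} the social optimum $\Gso(\epsilon)$ is likewise the equilibrium of the fictitious mechanism $SO$ with unit reward $r_{SO}(G)=\hp(G)(D-G)$, so it solves $\Gresp_{SO}(G,\epsilon)=G$. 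At $\epsilon=0$ with the socially optimal parameters, Theorem~\ref{thm.SO} guarantees that all three fixed points coincide at $\Gso(0)$.

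First I would show each $\Geq_j(\epsilon)$ is differentiable at $\epsilon=0$ and compute the derivative by the implicit function theorem applied to $F_j(G,\epsilon):=\Gresp_j(G,\epsilon)-G$. With $\alpha_j$ the opposite of the slope of $\Gresp_j$ at $\Gso(0)$ (see \eqref{eq.defalpha}), one has $\partial_G F_j=-(1+\alpha_j)$; since $\Gresp_j(\cdot)$ is nonincreasing (discussion after Theorem~\ref{thm.Nash}) we have $\alpha_j\ge0$, so $1+\alpha_j\ge1$ is bounded away from zero and the implicit function theorem yields
\begin{equation*}
\left.\dd{\Geq_j}{\epsilon}\right|_{\epsilon=0}=\left.\frac{\partial_\epsilon\Gresp_j}{1+\alpha_j}\right|_{(\Gso(0),0)}.
\end{equation*}

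The crux is to show the numerator $\partial_\epsilon\Gresp_j$ at $(\Gso(0),0)$ is the \emph{same} for all three mechanisms. This is forced by the socially optimal choice of parameters: by \eqref{eq.rRso} and \eqref{eq.unit_r} the three marginal rewards coincide at the base point, namely ${M^R}^{\prime}(\Gso)=\Rso/\Gso=\rso={M^T}^{\prime}(\Gso)=r_{SO}(\Gso)=\hp(\Gso)(D-\Gso)$. Hence at $G=\Gso(0)$ each type's first-order condition from \eqref{eq.marginal} reads $\cpr_\theta(x_\theta)+\epsilon\,p^{\prime}_\theta(x_\theta)=-h(\Gso)+\hp(\Gso)(D-\Gso)$, which is identical across $j\in\{R,T,SO\}$ and, at this fixed $G$, independent of $\epsilon$ on the right-hand side. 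Therefore the interior best responses $\xresp_\theta(\Gso(0),\epsilon)$ and the split of types into interior and corner regimes are mechanism-independent; differentiating the first-order condition in $\epsilon$ gives $\partial_\epsilon\xresp_\theta=-p^{\prime}_\theta/\cpp_\theta$ for interior types and $0$ for corner types (which stay at their corner for $|\epsilon|$ small). Integrating over $\Theta$ yields a common value $\beta:=\partial_\epsilon\Gresp_j|_{(\Gso(0),0)}$, and the standing non-degeneracy assumption on the perturbation direction guarantees $\beta\neq0$.

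Combining the above, $\dd{\Geq_j}{\epsilon}|_0=\beta/(1+\alpha_j)$ for each $j\in\{R,T,SO\}$ (with $\Gso(\epsilon)$ playing the role of $j=SO$), so a first-order Taylor expansion gives
\begin{equation*}
\Geq_j(\epsilon)-\Gso(\epsilon)=\beta\left(\frac{1}{1+\alpha_j}-\frac{1}{1+\alpha_{SO}}\right)\epsilon+o(\epsilon),\qquad j\in\{R,T\}.
\end{equation*}
Taking absolute values and comparing the $R$ and $T$ expressions, condition \textbf{(C1)} (resp.\ \textbf{(C2)}) makes the leading coefficient of $|\Geq_R(\epsilon)-\Gso(\epsilon)|$ strictly smaller (resp.\ larger) than that of $|\Geq_T(\epsilon)-\Gso(\epsilon)|$; since $\beta\neq0$, choosing $\epsilon_m>0$ small enough makes the stated strict inequality hold for all $0<|\epsilon|<\epsilon_m$, which proves (i) and (ii). I expect the main obstacle to be the rigorous differentiation of $\Gresp_j(G,\epsilon)=\int_\Theta\xresp_\theta\,\ud\mu(\theta)$ under the integral sign while controlling the (measure-zero) set of types crossing between the interior and corner regimes; the uniform bounds of \textbf{(A3)} and on $p^{\prime}_\theta$ should provide the domination needed, and the conclusion $\beta\neq0$ relies on the non-degeneracy hypothesis on the perturbation.
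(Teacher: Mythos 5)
Your proposal is correct and follows essentially the same route as the paper: the paper's Lemma~\ref{lemma2} is exactly your expansion $\Geq_j(\epsilon)=\Geq_j(0)+\frac{J_\epsilon}{1+\alpha_j}+o(\epsilon)$ with $J_\epsilon=\beta\epsilon+o(\epsilon)$, resting on the same two pillars — the coincidence of the three unit rewards at $\Gso(0)$, which makes the first-order shift of the aggregate best response mechanism-independent, and the division by $1+\alpha_j$ from the fixed-point equation. The only cosmetic difference is that you invoke the implicit function theorem where the paper argues geometrically about the perturbed best-response curve re-crossing the bisector, and you make the integrand $-p^{\prime}_\theta/\cpp_\theta$ explicit where the paper leaves $J_\epsilon$ abstract.
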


The intuition behind Proposition~\ref{prop.robustness_h} is the following: the mechanism with the unit reward closer to the optimal unit reward $r_{SO}(G)$ have an equilibrium closer to the social optimum equilibrium $\Gso(\epsilon)$. 
Since $r_{R}(G)$ and $r_{SO}(G)$ are both decreasing functions, one expects $r_{R}(G)$ to be closer to $r_{SO}(G)$ than $r_{T}(G)$. It is often the case. The fact that $r_{R}(G)$ decreases when $G$ increases is the \emph{closed-loop} effect: the more users reduce their peak-time demand, the lower the incentive to reduce it is. 
However, if  $r_{R}(G)$ decreases much faster that  $r_{SO}(G)$, $r_{T}(G)$ can be closer to  $r_{SO}(G)$. This possibility is covered by case \textit{(ii)} of Proposition~\ref{prop.robustness_h}.

Fig.~\ref{fig.G_vs_gamma} illustrates the result of Proposition~\ref{prop.robustness_h} with the perturbation $p_{\theta}(\cdot) = c_{\theta}(\cdot)$ for all $\theta\in\Theta$. 
As it turns out, Example~\ref{ex.model_use_case} (Fig.~\ref{fig.G_vs_gamma}-(a)) falls in case \textit{(i)} of Proposition~\ref{prop.robustness_h} (see the unit rewards on Fig.~\ref{fig.unit_reward}); hence the fixed-budget rebate mechanism remains closer to social optimum than the time-of-day pricing mechanism. 
This is due to fact that the sensitivity to congestion is ``strongly convex'', \ie function $h(\cdot)$ \eqref{eq.defh} is far from linear. Hence, the optimal unit reward \eqref{eq.unit_r_SO} decreases ``fast'', as for the fixed-budget rebate mechanism. 
For the sole purpose of illustrating numerically  case \textit{(ii)} of Proposition~\ref{prop.robustness_h}, we construct the following example: 
\begin{example}
\label{ex.model_use_case_2}
Everything is defined as in  Example~\ref{ex.model_use_case}, but the disutility function is artificially contrived to have $h(G) = 1.2\cdot 10^{-3} \cdot (G^{0.95}-D_p^{0.95})$. (The factor $1.2\cdot 10^{-3}$ is chosen to yield the same social optimum level of public good $\Gso$ than in Example~\ref{ex.model_use_case} when $\epsilon=0$.)
\end{example}
Ex.~\ref{ex.model_use_case_2} is a contrived example where $h(\cdot)$ is almost linear so that the optimal unit reward $r_{SO}$ is almost constant, as in the time-of-day pricing mechanism. As a result, the time-of-day pricing mechanism is closer to social optimum (Fig.~\ref{fig.G_vs_gamma}-(b)).

Proposition~\ref{prop.robustness_h} compares the distance between $\Geq$ and $\Gso$ for the two incentive mechanisms, when utilities are perturbed. It holds for all perturbation directions $\{p_{\theta}(\cdot)\}_{\theta\in\Theta}$ (satisfying the conditions mentioned above). However, the direction of the variation of $\Geq$ and $\Gso$ depends on the perturbation directions $\{p_{\theta}(\cdot)\}_{\theta\in\Theta}$. With the simple perturbation where $p_{\theta}(\cdot) = c_{\theta}(\cdot)$ for all $\theta\in\Theta$, Fig.~\ref{fig.G_vs_gamma} shows that $\Geq$ and $\Gso$ decrease when $\epsilon$ increases (this could also be easily derived analytically from the proof of Proposition~\ref{prop.robustness_h}). Intuitively, if users are less willing to contribute due to a high cost of shifting, the equilibrium and social optimal amount of public good will be lower. 
With a general perturbation, the variation of $\Geq$ and $\Gso$ is determined by the variation of the aggregate best response $\Gresp$ at the point $\Gso(0)$ when the utilities perturbation is introduced.

\begin{figure}%
\centering
\begin{tabular}{ccc}
\rotatebox{90}{\hspace{4.2cm} \rotatebox{-90}{(a)}} & \rotatebox{90}{\hspace{1.4cm} $\Gso(\epsilon), \Geq (\epsilon)$} \hspace{-3mm} & \includegraphics[width=0.7\columnwidth]{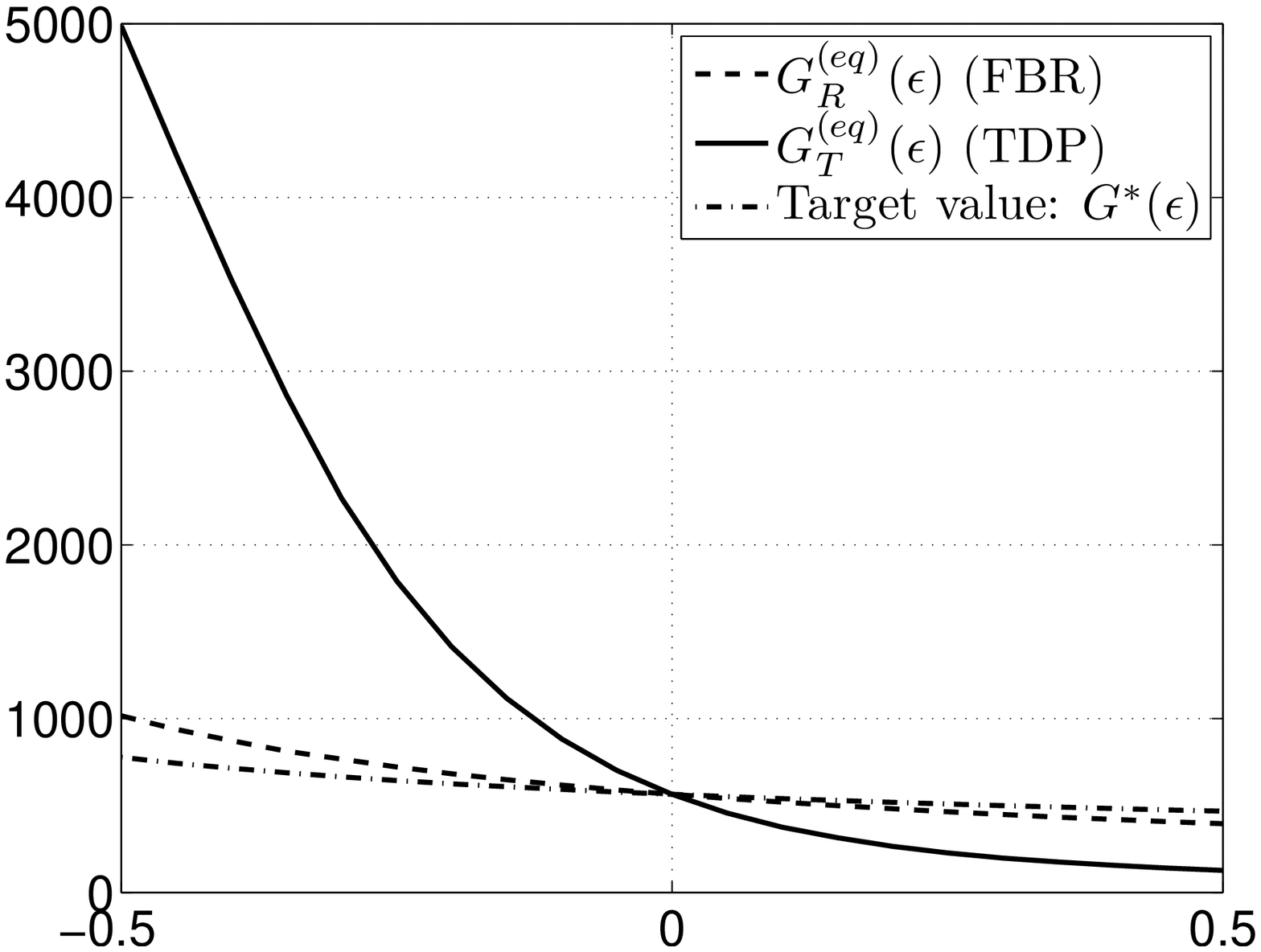} \\
& & \\[-4mm]
\rotatebox{90}{\hspace{4.2cm} \rotatebox{-90}{(b)}} & \rotatebox{90}{\hspace{1.4cm} $\Gso(\epsilon), \Geq (\epsilon)$} \hspace{-3mm} & \includegraphics[width=0.7\columnwidth]{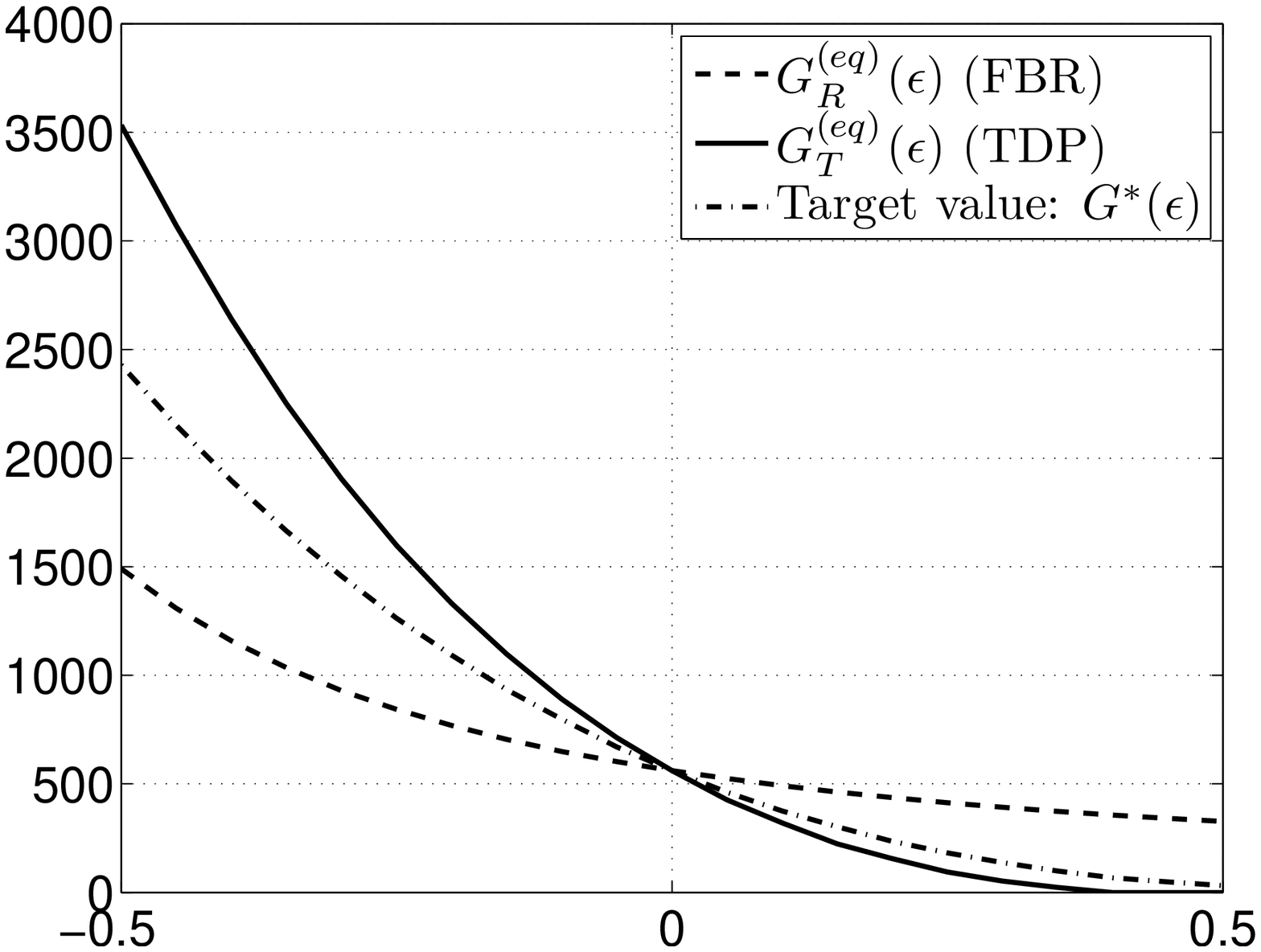} \\
& & $\epsilon$ 
\end{tabular}
\caption{
Variation of the equilibrium amount of public good $\Geq$ when functions $c_{\theta}(\cdot)$ are scaled by a factor $(1+\epsilon)$ starting from the baseline case: (a) for Example~\ref{ex.model_use_case} -- (b) for Example~\ref{ex.model_use_case_2}. 
}
\label{fig.G_vs_gamma}
\end{figure}

\begin{figure}%
\centering
\begin{tabular}{ccc}
\rotatebox{90}{\hspace{4.2cm} \rotatebox{-90}{(a)}} & \rotatebox{90}{\hspace{1.4cm} $r_{R}, r_{T}, r_{SO}$}  \hspace{-3mm}  & \includegraphics[width=0.65\columnwidth]{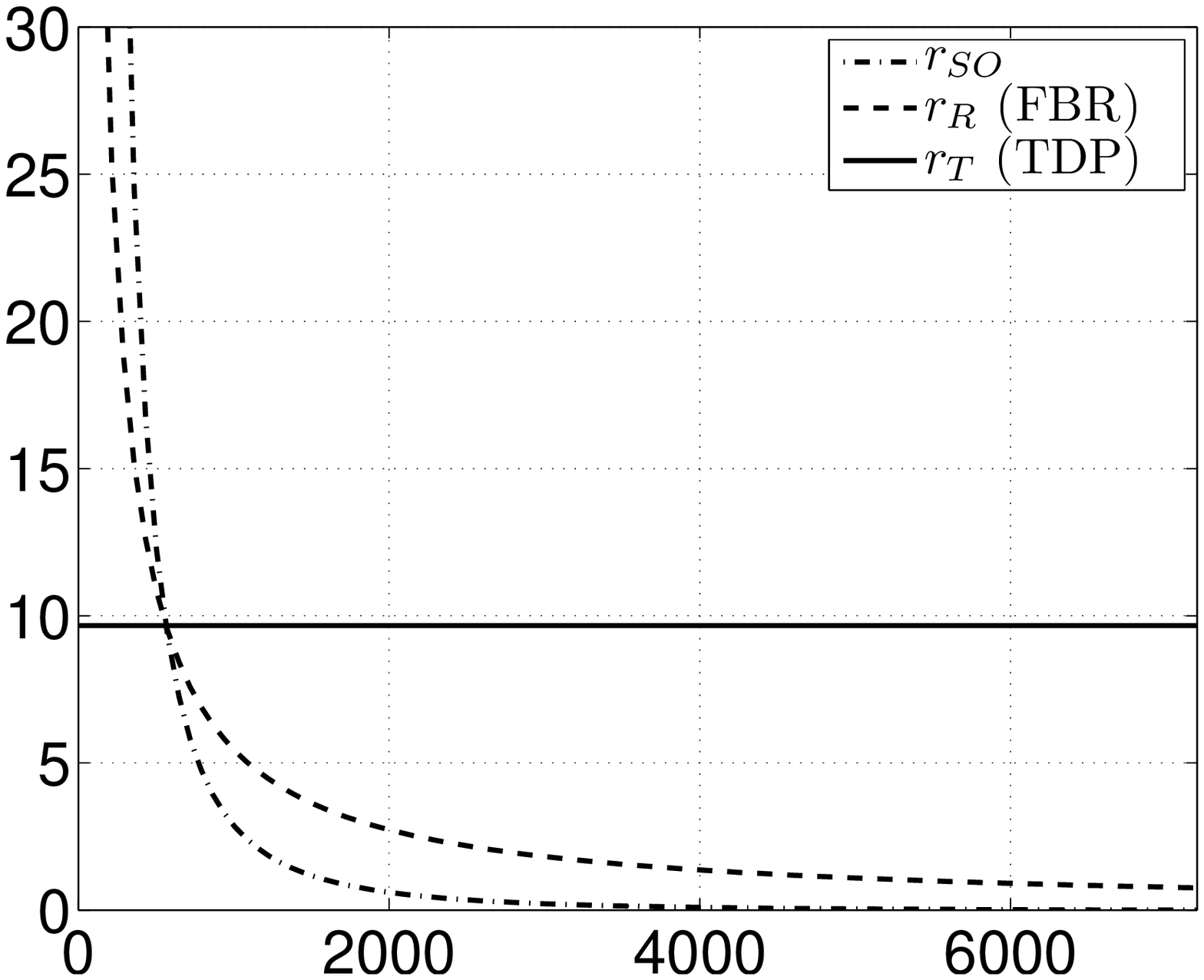} \\
& & \\[-4mm]
\rotatebox{90}{\hspace{4.2cm} \rotatebox{-90}{(b)}} & \rotatebox{90}{\hspace{1.4cm} $r_{R}, r_{T}, r_{SO}$}  \hspace{-3mm}  & \includegraphics[width=0.65\columnwidth]{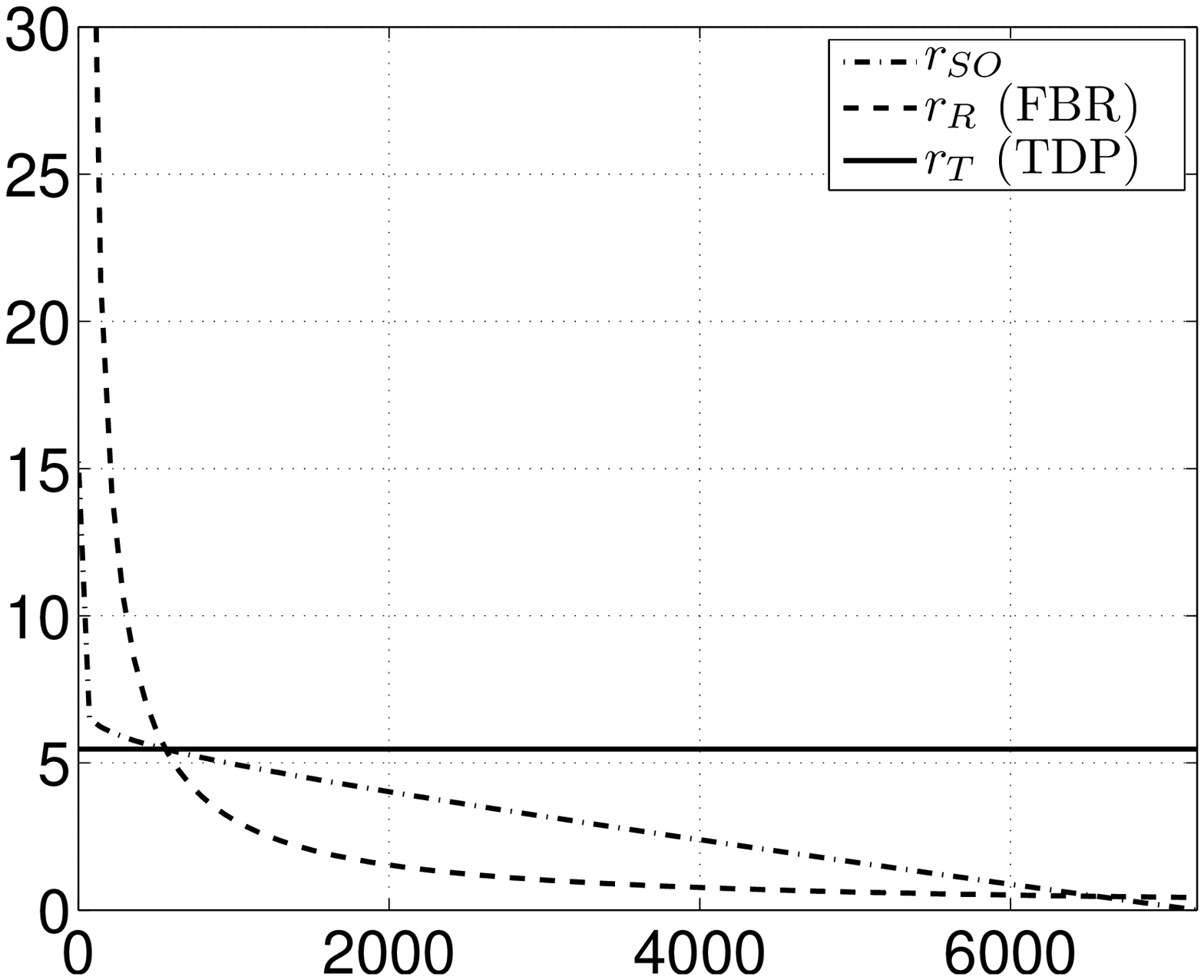} \\
& & $G$ 
\end{tabular}
\caption{
Unit reward as a function of $G$: (a) for Example~\ref{ex.model_use_case} -- (b) for Example~\ref{ex.model_use_case_2}. 
}
\label{fig.unit_reward}
\end{figure}

From Proposition~\ref{prop.robustness_h}, we deduce the following result.
\begin{theorem}%
\label{thm.robustness_h}
{\color{black} There exists $\epsilon_{m}>0$ such that, for any perturbation \eqref{eq.def_cti} with $\epsilon\neq0$ and $|\epsilon|<\epsilon_{m}$}, we have:\vspace{-.2mm}
\begin{enumerate}[(i)]
\item if condition \textbf{(C1)} is satisfied, then the fixed-budget rebate mechanism is more robust than the time-of-day pricing mechanism:\vspace{-2mm}
\begin{equation*}
\Weq_T(\epsilon) < \Weq_R(\epsilon) < \Wso(\epsilon);
\end{equation*}
\item if condition \textbf{(C2)} is satisfied, then the time-of-day pricing mechanism is more robust than the fixed-budget rebate mechanism:\vspace{-2mm}
\begin{equation*}
\Weq_R(\epsilon) < \Weq_T(\epsilon) < \Wso(\epsilon).
\end{equation*}
\end{enumerate}
\end{theorem}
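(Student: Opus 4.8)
The plan is to derive the welfare ranking directly from the public-good ranking of Proposition~\ref{prop.robustness_h}, by showing that for each of the three mechanisms ($R$, $T$, and the social-optimum mechanism $SO$ of Theorem~\ref{thm.SO}) the equilibrium welfare depends on the contribution profile $x_{\cdot}$ only through the aggregate $G$, via one and the same strictly concave function peaked at $\Gso(\epsilon)$.

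First I would rewrite the aggregate welfare \eqref{eq.defW} using \eqref{eq.utility}. Since $\int_{\Theta}(d_p - x_{\theta})\,\ud\mu(\theta) = D_p - G$, the welfare takes the form $W = B + (D_p - G)\,h(G) - \int_{\Theta}\cti_{\theta}(x_{\theta})\,\ud\mu(\theta)$, where $B = \int_{\Theta}\bar{u}_{\theta}\,\ud\mu(\theta) - p\,\mu(\Theta)$ is a constant independent of the profile and the cost integral is the only term in which the profile enters beyond the aggregate $G$. The key observation is that at any Nash equilibrium, for any mechanism $j\in\{R,T,SO\}$, the first-order conditions \eqref{eq.FOC_continuous} equalize the marginal cost $\cti'_{\theta}(x_{\theta})$ across all interior types to the common value ${M^j}'(G) - h(G)$, with boundary types satisfying the corresponding KKT inequalities. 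Hence the equilibrium profile \emph{minimizes} $\int_{\Theta}\cti_{\theta}(x_{\theta})\,\ud\mu(\theta)$ subject to $\int_{\Theta}x_{\theta}\,\ud\mu(\theta)=G$. Writing $\tilde{C}(G;\epsilon)$ for this minimal cost, I obtain a single function $\tilde{W}(G;\epsilon) = B + (D_p - G)\,h(G) - \tilde{C}(G;\epsilon)$ with $\Weq_R(\epsilon) = \tilde{W}(\Geq_R(\epsilon);\epsilon)$, $\Weq_T(\epsilon) = \tilde{W}(\Geq_T(\epsilon);\epsilon)$, and $\Wso(\epsilon) = \tilde{W}(\Gso(\epsilon);\epsilon) = \max_{G}\tilde{W}(G;\epsilon)$.

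Next I would pin down the shape of $\tilde{W}(\cdot;\epsilon)$. The term $(D_p - G)\,h(G)$ is strictly concave, since by \textbf{(A1)} its second derivative equals $-2\hp(G) + (D_p - G)\,h''(G) < 0$ for $G<D_p$; and $\tilde{C}(\cdot;\epsilon)$, being the value function of minimizing a sum of convex costs \textbf{(A2)} over the linear constraint $\int_{\Theta}x_{\theta}\,\ud\mu=G$, is convex. Thus $\tilde{W}(\cdot;\epsilon)$ is strictly concave with unique maximizer $\Gso(\epsilon)$ (consistent with Theorem~\ref{thm.SO}), so $\tilde{W}'(\Gso(\epsilon);\epsilon)=0$ and the curvature at the peak is bounded away from zero, uniformly for small $\epsilon$, already by the strictly negative contribution of $(D_p - G)\,h(G)$. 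I would then expand $\tilde{W}$ to second order about $\Gso(\epsilon)$: because the linear term vanishes, $\Wso(\epsilon) - \Weq_j(\epsilon) = \tfrac12\,|\tilde{W}''(\Gso(\epsilon);\epsilon)|\,(\Geq_j(\epsilon)-\Gso(\epsilon))^{2}\,(1+o(1))$ for $j\in\{R,T\}$, with a common, strictly positive leading coefficient. Since all three equilibria converge to $\Gso(0)$ as $\epsilon\to0$, the deviations $\Geq_j(\epsilon)-\Gso(\epsilon)$ are of order $\epsilon$, and Proposition~\ref{prop.robustness_h} shows that under \textbf{(C1)} (resp.\ \textbf{(C2)}) the gap $(\Geq_R-\Gso)^{2}-(\Geq_T-\Gso)^{2}$ is strictly negative (resp.\ positive) at order $\epsilon^{2}$, hence dominates the $o(\epsilon^{2})$ remainder for $|\epsilon|<\epsilon_{m}$. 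This yields $\Weq_R(\epsilon) > \Weq_T(\epsilon)$ under \textbf{(C1)} and the reverse under \textbf{(C2)}; combined with $\Weq_j(\epsilon) < \Wso(\epsilon)$ (strict whenever $\Geq_j\neq\Gso$, which holds for the relevant mechanisms by the non-degeneracy of the perturbation, $\phi_{0}\neq0$), I recover exactly the two chains of inequalities claimed.

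The main obstacle will be making the ``welfare depends only on the aggregate $G$'' reduction and the second-order comparison fully rigorous in the measure-theoretic, boundary-constrained setting. Specifically, I must justify that the minimal-cost value function $\tilde{C}(\cdot;\epsilon)$ is twice differentiable at $\Gso(\epsilon)$ so that the quadratic comparison is legitimate even when $\Geq_R$ and $\Geq_T$ lie on opposite sides of the peak; this relies on the set of interior types having positive $\mu$-measure and on an envelope/inverse-function argument (using strict convexity in \textbf{(A2)}) that is uniform in $\epsilon$, and it also requires the $o(\epsilon^{2})$ remainders to be uniform so that a single $\epsilon_{m}$ serves both mechanisms. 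A secondary point, to be excluded using $\phi_{0}\neq0$ together with $\alpha_{R}\neq\alpha_{SO}$ (generically true), is the degenerate case in which the closer mechanism's equilibrium coincides with $\Gso(\epsilon)$ to first order, which would otherwise weaken the strict inequality $\Weq_R(\epsilon)<\Wso(\epsilon)$ (resp.\ $\Weq_T(\epsilon)<\Wso(\epsilon)$).
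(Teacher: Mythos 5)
Your proposal is correct and follows essentially the same route as the paper's proof: regard the welfare as a function of the single aggregate $G$, use that its derivative vanishes at $\Gso(\epsilon)$, and conclude from a second-order Taylor expansion combined with the ordering of $|\Geq_j(\epsilon)-\Gso(\epsilon)|$ from Proposition~\ref{prop.robustness_h}. Your extra step --- characterizing every equilibrium profile as the minimizer of $\int_{\Theta}\cti_{\theta}(x_{\theta})\,\ud\mu(\theta)$ given its aggregate, so that $\Weq_R$, $\Weq_T$ and $\Wso$ are all values of one common strictly concave function $\tilde{W}(\cdot;\epsilon)$ --- is a careful justification of what the paper's one-line argument ($W(G)=W(\xresp(G))$, then expand) leaves implicit, and the issues you flag (twice-differentiability of the cost value function at the peak, and the possibility that the two equilibria lie on opposite sides of $\Gso(\epsilon)$) are precisely the points the paper glosses over.
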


Theorem~\ref{thm.robustness_h} is our main robustness result. It establishes which of the two mechanisms remains closer to optimal after the perturbation, in terms of welfare, \ie in terms of user expected utility over the timescale of a month (see Sec.~\ref{sec.measure}). The conditions of Theorem~\ref{thm.robustness_h} are the same as in Proposition~\ref{prop.robustness_h}: mechanism $j\in \{ R,T \}$ is more robust if its unit reward is closer to the optimal unit reward. Since Example~\ref{ex.model_use_case} satisfies condition \textbf{(C1)} (due to ``strong enough'' convexity of the sensitivity to congestion), the fixed-budget rebate mechanism is more robust. It means that if the provider chooses the parameters based on an imperfect estimation of the cost of shifting $c_{\theta}(\cdot)$, the welfare of a population of users whose actual cost of shifting is $\cti_{\theta}(\cdot)$ will be higher with the fixed-budget rebate mechanism than with the time-of-day pricing mechanism. 
Similarly, if the cost of shifting was varying according to a given probability law and parameters $R$ and $r$ were chosen based on expectations, the fixed-budget rebate mechanism would give a higher expected welfare.

Fig.~\ref{fig.W_vs_gamma} illustrates our robustness results for a simple perturbation. It shows that our analysis with $\epsilon$ close to zero extends to larger perturbations. The numerical values for Example~\ref{ex.model_use_case} are reported in Tab.~\ref{tab.2}: for a $50\%$ error in the cost-of-shifting estimation ($\epsilon=0.5$), the welfare is $20\%$ below optimal with the time-of-day pricing mechanism, whereas it is only $0.3\%$ below optimal with the fixed-budget rebate mechanism.

\begin{table}
\centering
\caption{Equilibrium for Example~\ref{ex.model_use_case}, with utilities perturbed by a scaling of factor $(1+\epsilon)$ with $\epsilon=0.5$. The parameters are chosen based on unperturbed utilities: $R=\$5,500$ and $r=\$9$/Gbit (see Fig.~\ref{fig.case1}). }
\vspace{-2mm}
\label{tab.2}
\begin{tabular}{|c|c|c|c|c|}
\hline
& no mecha. & TDP mecha. & FBR pricing & SO  \\
\hline
\hline
G & 37~Gbits & 127~Gbits & 395~Gbits & 467~Gbits \\
W & $\sim$0 & 61,000 & 75,200 & 75,400 \\
\hline
\end{tabular}
\vspace{-5mm}
\end{table}

\begin{figure}%
\centering
\begin{tabular}{ccc}
\rotatebox{90}{\hspace{4.2cm} \rotatebox{-90}{(a)}} & \rotatebox{90}{\hspace{1.1cm} $\Wso(\epsilon)-\Weq(\epsilon)$} \hspace{-3mm} & \includegraphics[width=0.68\columnwidth]{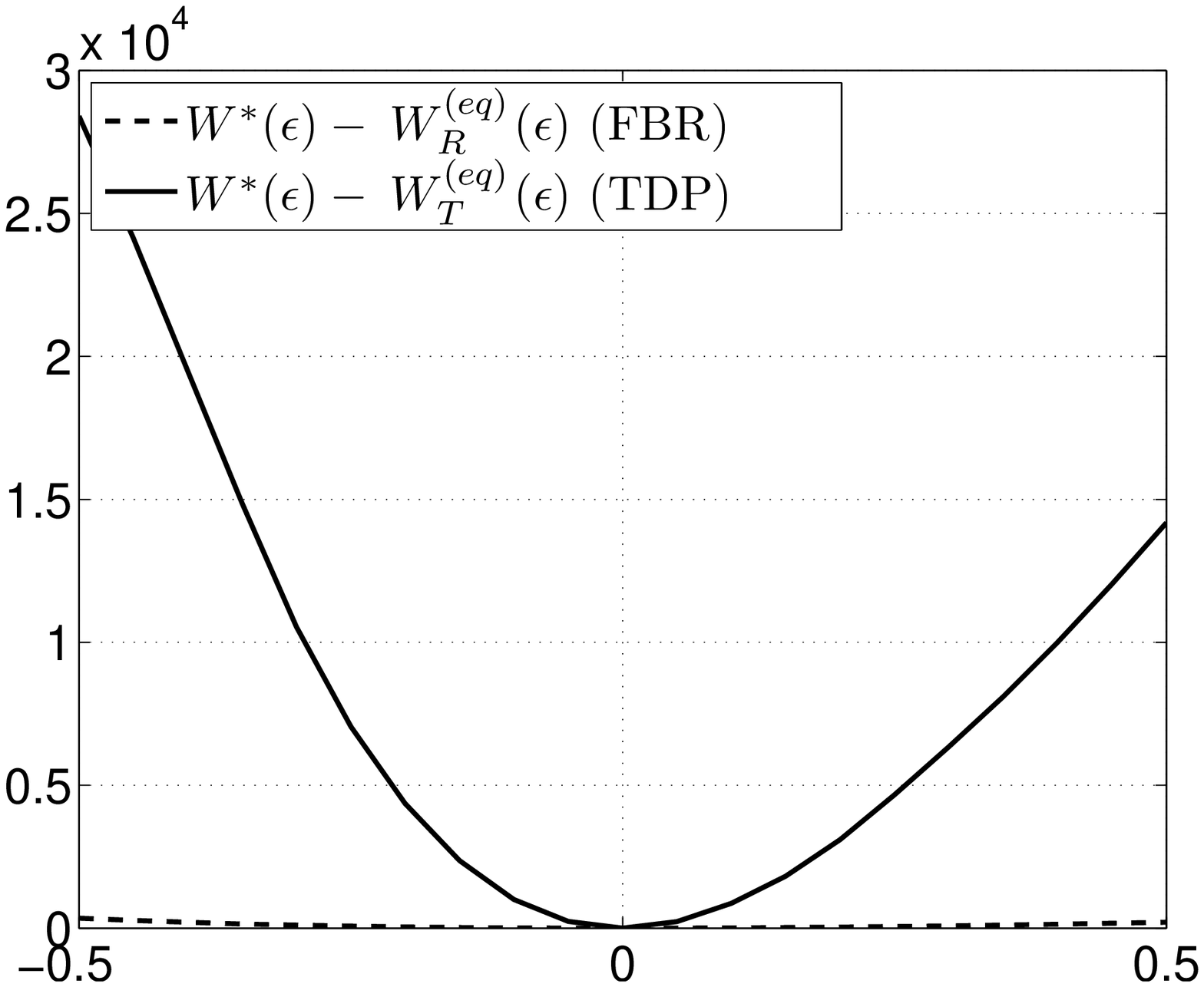} \\
& & \\[-4mm] 
\rotatebox{90}{\hspace{4.2cm} \rotatebox{-90}{(b)}} & \rotatebox{90}{\hspace{1.1cm} $\Wso(\epsilon)-\Weq(\epsilon)$} \hspace{-3mm} & \includegraphics[width=0.7\columnwidth]{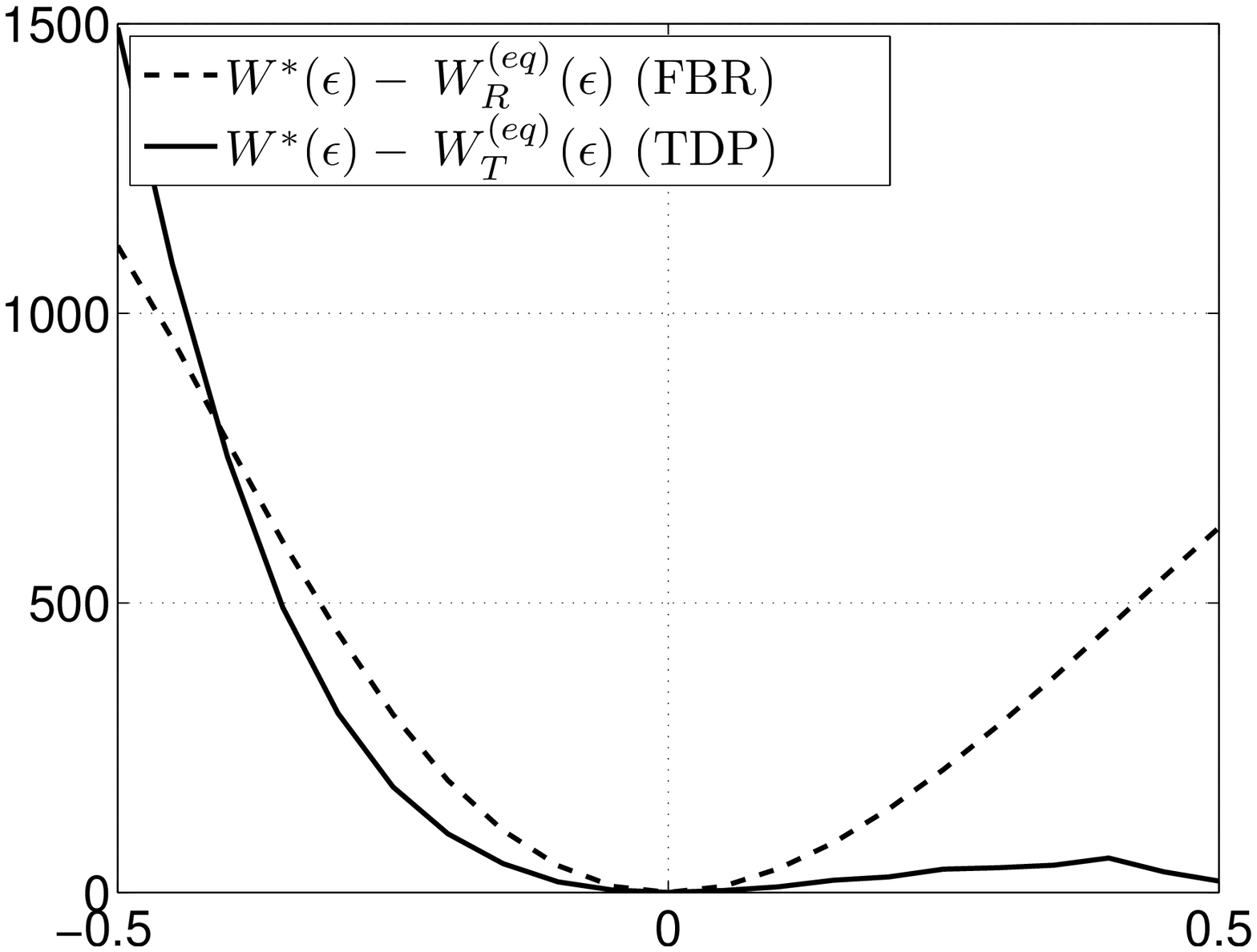} \\
& & $\epsilon$ 
\end{tabular}
\caption{
Variation of the aggregate welfare $\Weq$ when functions $c_{\theta}(\cdot)$ are scaled by a factor $(1+\epsilon)$ starting from the baseline case: (a) for Example~\ref{ex.model_use_case} -- (b) for Example~\ref{ex.model_use_case_2}. For readability and robustness comparison, the difference $\Wso(\epsilon)-\Weq(\epsilon)$ is plotted. 
}
\vspace{-5mm}
\label{fig.W_vs_gamma}
\end{figure}

\section{Concluding remarks}
\label{sec.conclusion}

This paper provides a comparative analysis of two incentive mechanisms of reducing peak-time congestion in Internet broadband access: a fixed-budget rebate mechanism inspired by the economic literature on public good provision by means of lotteries, and a more standard for the network literature time-of-day pricing mechanism. 
The fixed-budget rebate mechanism can be interpreted as probabilistic pricing: in this mechanism, each user's reward depends not only on his contribution but also on the contribution of the other users.  %
We suggest that this mechanism has two advantages relative to the time-of-day pricing mechanism with given prices for specific time slots. Firstly, the fixed-budget rebate mechanism is easy to implement via lottery-like scheme(s), for which a total user reward is announced by the ISP in advance. Secondly, it has built-in self-tuning, which appears to be attractive in environments with imperfectly known demand.

{\color{black} Our paper uses a simplified model to provide a theoretical structure that permits to understand the benefits of the fixed-budget rebate mechanism over more standard approaches. The deployment of the mechanism will raise a number of practical questions. In particular, the ISP has to decide at which scale to deploy the mechanism: deploying it at the scale of a base station would involve too precise monitoring whereas deploying it region-wide would face the issue that users do not all share the same access bottleneck. We believe that such decision should be made based on historical statistics on each bottleneck that are  accessible to ISPs.  } 
{\color{black} Our model also considers only two time periods whereas it could be useful for an ISP to use a finer subdivision of the day. Again, we believe that the number of time periods should be determined using historical data available to the provider. The fixed-budget rebate mechanism could be easily extended to multiple time periods and we believe that it would remain more robust than the time-of-day pricing mechanism.  }

In our model, we considered a monopolist and the reward was financed by an increase in the subscription price. We showed that both mechanisms still improve each user's average utility. In different scenarios, if the subscription price cannot be increased, it would be possible to finance the reward by a different means, \eg by the reduction of the congestion cost or by the higher number of customers that the provider could accommodate with the same infrastructure due to lower congestion. 

{\color{black}  Our model focuses on user welfare maximization rather than on the cost savings for the ISP. However, we believe that both objectives are consistent, as in an competitive environment, increasing user welfare allows either to accommodate more users with the same capacity or to reduce the capacity provisioning costs for the same user base. A quantitative analysis of these questions would require modeling of the cost structure and of the competition (we could typically assume perfect competition) and is left as future work.   }

{\color{black} To implement in practice the fixed-budget rebate mechanism, an ISP will need to track separately the consumption at peak and off-peak time. However, such separate accounting is also needed for the time-of-day pricing mechanism and is already technically possible in most settings (mobile access, electricity with smart meters, etc.). The use of probabilistic pricing also raises the question of contention billing and verifiability. However, these transparency issues are the same as in many other contexts where they have been successfully solved (state lotteries, casinos, etc.). For instance, inspection techniques similar to those used for gambling \cite{Gaming} could be used here. }

While our motivating application is telecommunications, we believe that the fixed-budget rebate mechanism could be modified for use in other applications such as electricity and transportation networks. 
In the case of electricity demand management, privacy and security considerations make our mechanism advantageous relative to real-time pricing. %
Indeed, our mechanism requires no real-time user-dispatcher communication. 
In addition, unlike currently suggested real-time pricing mechanisms (\eg \cite{LeBoudec11a}), our mechanism requires only aggregate data. 
We plan to explore these other applications in future work.

\bibliographystyle{IEEEtran}
\bibliography{../../biblio_lotteries.bib}

\appendices

\section{Properties of the cost of shifting}
\label{app.costconvex}

Let $\theta\in\Theta$. 
First, we show that the function $x_{\theta} \mapsto c_{\theta} (x_{\theta})$, defined by \eqref{eq.cost}, is differentiable on $[0, d_p]$. By assumption, the function $x_{\theta}\mapsto P_{\theta}(d_p-x_{\theta})-(d_p-x_{\theta})q$ is differentiable so we only need to show that the function $x_{\theta}\mapsto O_{\theta}(z^{*}_{\theta} (x_{\theta}))-z^{*}_{\theta} (x_{\theta})q$ at $z_{\theta, max}$ is differentiable on $[0, d_p]$.  Since the function $z\mapsto O_{\theta}(z)$ is twice differentiable increasing strictly concave by assumption, the function $z\mapsto O_{\theta}(z)-zq$ is twice differentiable strictly concave. Let $z_{\theta, max}$ be its maximum on $[0, d_p]$. Then, \vspace{-3mm}
\begin{equation*}
z^{*}_{\theta} (x_{\theta}) = \left\{ 
\begin{array}{l}
x_{\theta} \textrm{ if } x_{\theta} \leq z_{\theta, max}, \\ 
z_{\theta, max} \textrm{ if } x_{\theta} > z_{\theta, max}. 
\end{array}
\right.\\[-1mm]
\end{equation*}
If $z_{\theta, max} = 0$ (hence $z^{*}_{\theta} (x_{\theta}) = z_{\theta, max}, \forall x \in [0, d_p]$), or if $z_{\theta, max} = d_p$ (hence $z^{*}_{\theta} (x_{\theta}) = x_{\theta}, \forall x \in [0, d_p]$), then $x_{\theta}\mapsto O_{\theta}(z^{*}_{\theta} (x_{\theta}))-z^{*}_{\theta} (x_{\theta})q$ is clearly differentiable on $[0, d_p]$. 
If $z_{\theta, max} \in (0, d_p)$, we need to show that $x_{\theta}\mapsto O_{\theta}(z^{*}_{\theta} (x_{\theta}))-z^{*}_{\theta} (x_{\theta})q$ is differentiable at the point $z_{\theta, max}$. But since $z_{\theta, max}$ is an interior maximum, we have $O_{\theta}(z_{\theta, max})-q=0$, therefore both the left and right derivatives of $x_{\theta}\mapsto O_{\theta}(z^{*}_{\theta} (x_{\theta}))-z^{*}_{\theta} (x_{\theta})q$ at $z_{\theta, max}$ are zero.

Next, to show that the function $x_{\theta} \mapsto c_{\theta} (x_{\theta})$ is strictly convex on $[0, d_p]$, we show that its derivative is increasing. By assumption that $P_{\theta}(\cdot)$ is strictly concave, the derivative of $x_{\theta}\mapsto -\left[ P_{\theta}(d_p-x_{\theta})-(d_p-x_{\theta})q\right]$ is increasing. 
Moreover, the derivative of $x_{\theta}\mapsto -\left[O_{\theta}(z^{*}_{\theta} (x_{\theta}))-z^{*}_{\theta} (x_{\theta})q\right]$ is  increasing on $[0, z_{\theta, max})$ and constant on $[z_{\theta, max}, d_p]$. Therefore the derivative of  $x_{\theta} \mapsto c_{\theta} (x_{\theta})$ is increasing. 

Last, we show that $x_{\theta} \mapsto c_{\theta} (x_{\theta})$ is increasing on $[\xl_{\theta}, d_p]$. If $\xl_{\theta}=d_p$, the result is trivial. If $\xl_{\theta}<d_p$, by definition of $\xl_{\theta}$,  $\cpr_{\theta}(\xl_{\theta}) \ge 0$. Hence, since we have shown that $\cpr_{\theta}(\cdot)$ is increasing, we have $\cpr_{\theta}(x_{\theta}) > 0, \forall x\in(\xl_{\theta}, d_p]$ which gives the result.

\section{Proof of \protect Theorem~\ref{thm.Nash}}
\label{app.proof_existenceuniqueness}

Let $R$ and $r$ be fixed and let $j\in\{R, T\}$. For a given $G\in[0, D_p]$, the best response (solving the FOCs~\eqref{eq.FOC_continuous}) defines a measurable function $\xresp(G): \Theta\to [0, d_p]$ given for all $\theta\in\Theta$ by \vspace{-4mm}
\begin{equation}
\label{eq.xresp}
\xresp_{\theta}(G) \!\!=\!\! \left\{
\begin{array}{l}
 0,   \textrm{ if }   {M^j}^{\prime}(G) - h(G) - \dpr_{\theta}(0) \le 0, \\
 d_p,   \textrm{ if }   {M^j}^{\prime}(G) - h(G) - \dpr_{\theta}(d_p) \ge 0,\\ %
 (\dpr_{\theta})^{-1} \left(  {M^j}^{\prime}(G) - h(G)\right),  \textrm{ otherwise}.%
\end{array}
\right.\\[-1.5mm]
\end{equation}
Due to assumption \textbf{(A2)}, $\dpr_{\theta}(\cdot)$ is strictly increasing, hence invertible and with an increasing inverse function. Therefore, \eqref{eq.xresp} uniquely defines $\xresp(G)$. %
Let \vspace{-1.5mm}
\begin{equation}
\label{eq.fixed-point2} \Gresp(G) = \int_{\Theta}\xresp_{\theta}(G) \ud \mu(\theta) \\[-1.5mm]
\end{equation}
be the aggregate best response. %
By definition and strict concavity of the utility function,  a measurable function $x: \Theta\to [0, d_p]$ is a Nash equilibrium if and only if there exists $G\in[0, D_p]$ satisfying the fixed-point equation~\eqref{eq.fixed-point} such that $x=\xresp(G)$. 
To conclude the proof, we show that \eqref{eq.fixed-point} admits a unique fixed-point (see illustration on Fig.~\ref{fig.case1}). \vspace{-1mm}
\begin{lemma}%
\label{lem.unique_fp}
There exists a unique solution of \eqref{eq.fixed-point}. 
\end{lemma}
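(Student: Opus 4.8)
The plan is to study the scalar map $\phi(G) = \Gresp(G) - G$ on $[0, D_p]$ and show it has exactly one zero; by the definition of $\Gresp$ in \eqref{eq.fixed-point2}, such a zero is precisely a solution of the fixed-point equation \eqref{eq.fixed-point}. Uniqueness will come from strict monotonicity of $\phi$, and existence from continuity of $\phi$ together with a sign change across the interval (the intermediate value theorem). The two ingredients to establish are therefore: (a) $\Gresp$ is non-increasing on $[0, D_p]$, and (b) $\Gresp$ is continuous with the appropriate behavior at the endpoints.

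For (a), first I would show that each individual best response $\xresp_\theta(\cdot)$ given by \eqref{eq.xresp} is non-increasing in $G$. Writing $s(G) = {M^j}^{\prime}(G) - h(G)$ for the effective marginal incentive, assumption \textbf{(A1)} makes $h$ increasing, so $-h$ is decreasing, and by \eqref{eq.Sp} the marginal reward ${M^j}^{\prime}(G)$ is either the constant $r$ (TDP) or $R/G$ (FBR), both non-increasing in $G$; hence $s(\cdot)$ is non-increasing. Assumption \textbf{(A2)} makes $\cpr_\theta$ continuous and strictly increasing, so $(\cpr_\theta)^{-1}$ is continuous and increasing, and on the interior branch of \eqref{eq.xresp} the value $(\cpr_\theta)^{-1}(s(G))$ is non-increasing in $G$; clipping to $[0, d_p]$ preserves this. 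Integrating over $\theta$ against $\mu$ shows $\Gresp$ is non-increasing, so $\phi(G) = \Gresp(G) - G$ is \emph{strictly} decreasing (the $-G$ term is strictly decreasing). This already yields at most one solution.

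For (b), I would note that $s(\cdot)$ is continuous (on $[0, D_p]$ for TDP, on $(0, D_p]$ for FBR) and that the clipping map $s \mapsto \min\{d_p, \max\{0, (\cpr_\theta)^{-1}(s)\}\}$ is continuous, so each $\xresp_\theta(\cdot)$ is continuous in $G$; since $0 \le \xresp_\theta \le d_p$ is $\mu$-integrable on the finite measure space, dominated convergence gives continuity of $\Gresp$. At the upper endpoint, $\xresp_\theta \le d_p$ gives $\Gresp(D_p) \le D_p$, so $\phi(D_p) \le 0$. For the TDP mechanism, $\Gresp(0) \ge 0$ gives $\phi(0) \ge 0$, and the intermediate value theorem together with strict monotonicity closes this case.

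The main obstacle is the lower endpoint for the FBR mechanism, where the marginal reward $R/G$ is singular at $G=0$. Here I would argue that as $G \to 0^+$ we have $s(G) = R/G - h(G) \to +\infty$, so for every $\theta$ the second branch of \eqref{eq.xresp} is active once $G$ is small enough and $\xresp_\theta(G) = d_p$; hence $\lim_{G \to 0^+}\Gresp(G) = D_p$ and $\phi(0^+) = D_p > 0$. Invoking the model's convention that no reward is paid when the contributing set has measure zero, $G = 0$ is not itself a fixed point for $R > 0$, so the sign change $\phi(0^+) > 0 \ge \phi(D_p)$ holds on $(0, D_p]$ and the unique zero lies in that interval. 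Reconciling this singular-endpoint analysis with the measure-theoretic reward convention is the only delicate point; the remainder is a routine monotonicity-plus-continuity argument.
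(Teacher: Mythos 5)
Your proof is correct and follows essentially the same route as the paper's: show that $\Gresp$ is continuous (via continuity of each $\xresp_\theta$ plus dominated convergence on the finite measure space) and non-increasing in $G$, so that it crosses the strictly increasing identity map exactly once on $[0,D_p]$. Your extra care at the $G=0$ endpoint for the fixed-budget rebate mechanism (where $R/G$ is singular) is a minor refinement the paper leaves implicit, not a different argument.
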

\vspace{-2mm}
\begin{proof}
The r.h.s. of \eqref{eq.fixed-point} ($G$) is clearly a strictly increasing continuous function of $G$,  from $[0, D_p]$ to $[0, D_p]$. 

For the l.h.s. ($\Gresp(G)$), firstly note that it is a continuous function of $G$. Indeed, due to assumption \textbf{(A2)},  $(\cpr_{\theta}(\cdot))^{-1}(\cdot)$ is strictly increasing continuous, hence $\xresp_{\theta}(G)$ is continuous in $G$  for all $\theta\in\Theta$. Moreover, the function $\xresp(G): \Theta\to [0, d_p]$ is dominated by the constant function equal to $d_p$ (\ie $| \xresp(G)|\le d_p$) which is integrable w.r.t. $\mu$. Therefore, for any $G\in [0, D_p]$ and for any sequence $(G_n)_{n\ge0}$ which converges to $G$, we have $\xresp(G_n)\xrightarrow[n\to\infty]{}\xresp(G)$ pointwise (by continuity of $\xresp_{\theta}(G)$ w.r.t to $G$ for all $\theta\in\Theta$) and by Lebesgue dominated convergence theorem, \vspace{-1mm}
\begin{eqnarray}
\nonumber \lim_{n\to\infty} \Gresp(G_n) &=& \lim_{n\to\infty}\int_{\Theta}\xresp_{\theta}(G_n) \ud \mu(\theta) \\ 
\nonumber  &=& \int_{\Theta} \left[ \lim_{n\to\infty} \xresp_{\theta}(G_n) \right] \ud \mu(\theta) \\ 
\nonumber &=& \Gresp(G). \\[-6mm]\nonumber
\end{eqnarray}

Clearly, the l.h.s. ($\Gresp(G)$) is also a non-increasing function of $G$ taking values in $[0, D_p]$. Therefore, there is a unique fixed-point of \eqref{eq.fixed-point}.
\end{proof}

\section{Proof of \protect Theorem~\ref{thm.SO}}
\label{app.proof_SO}

We first prove \textit{(i)}. Let $\X$ be the set of functions $x: \Theta\to\R$ such that $\int_{\Theta} \left| x_{\theta} \right| \ud \mu( \theta) < \infty$ and  let $\X_0\subset\X$ be the set of functions $x: \Theta\to [0, d_p]$. Consider the aggregate welfare \eqref{eq.defW} as a functional on $\X_0$ taking values in $\R$: \vspace{-1.5mm}%
\begin{eqnarray}
\nonumber W (x) &=& h\left(\int_{\Theta} x_{\theta} \ud \mu ( \theta)\right) \left(D_p - \int_{\Theta} x_{\theta} \ud \mu ( \theta)\right) \\ 
\nonumber && -  \int_{\Theta} c_{\theta} (x_{\theta}) \ud \mu ( \theta) + \int_{\Theta} \bar{u}_{\theta} \ud \mu ( \theta)  - p\frac{D_p}{d_p}.
\end{eqnarray}

\vspace{-1.5mm}
Since $\X_0$ is compact and the functional $W$ is continuous, it has a maximum (see Corollary 38.10 of \cite[p. 152]{Zeidler85a}). 
Let $\xso\in \X_0$ be such that $W$ is maximal and let \vspace{-2mm}
$$\Gso = \int_{\Theta} \xso_{\theta} \ud \mu(\theta).\\[-1mm]$$ 
Define the three subsets of $\Theta$: $\Bp_1$, $\Bp_2$ and $\Bp_3$ where $\xso=0$, $\xso\in (0, d_p)$ and $\xso=d_p$ respectively. 
We now derive necessary conditions for $\xso$ to maximize $W$ in each subset. %

We start with the subset $\Bp_2$ corresponding to interior points. Let $y\in\X$ be such that $y_{\theta}=0$ for all $\theta\in\Theta\backslash\Bp_2$. We define the directional derivative (also called G\^ateaux derivative) of $W$ around $\xso$ in the direction $y$ as\vspace{-2mm}
\begin{equation*} 
 \ud W (\xso,y) = \lim_{t\to 0} \frac{W (\xso+ty) - W (\xso)}{t}.\\[-2mm]
\end{equation*}
Then, we have\vspace{-2mm}
\begin{align*} 
 \ud W (\xso\!\!,y) \!=\!\!\!
\int_{\Theta_2} \!\!\!\!y_{\theta} \! \cdot \! \left[ \hp(\Gso)(D_p-\Gso) \!-\! h(\Gso) \!-\! \dpr_{\theta}(\xso_{\theta}) \right] \ud \mu (\theta),
\end{align*}
where the exchange between limit and integration in the last term (giving $- y_{\theta} \dpr_{\theta}(x_{\theta})$) is justified by Lebesgue's dominated convergence theorem whenever $\int_{\Theta} \left| y_{\theta} \cdot \dpr_{\theta}(x_{\theta}) \right| \ud \mu ( \theta) < \infty$. This holds here due to assumption \textbf{(A3)}.  \\
For $\xso$ to be optimal, it is necessary that $\ud W (\xso,y)=0$, \ie\vspace{-2mm}
\begin{equation*}
 \int_{\Bp_2} y_{\theta} \cdot \left[ \hp(\Gso)(D_p-\Gso) - h(\Gso) - \dpr_{\theta}(\xso_{\theta}) \right] \ud \mu ( \theta) = 0.\\[-1mm]
\end{equation*}
For this to hold for any function $y$ such that $y_{\theta}=0$ for all $\theta\in\Theta\backslash\Bp_2$, it is necessary that we have\vspace{-2mm}
\begin{equation}
\label{eq.nec1}
\hp(\Gso)(D_p-\Gso) - h(\Gso) - \dpr_{\theta}(\xso_{\theta}) = 0,\\[-1mm]
\end{equation}
for almost-all $\theta\in\Bp_2$, \ie for almost-all $\theta$ such that $\xso_{\theta}\in (0, d_p)$.

We now treat the case of subset $\Bp_1$, which corresponds the points of the lower boundary. Let $y\in\X$ be such that $y_{\theta}\ge0$ for all $\theta\in\Bp_1$ and $y_{\theta}=0$ for all $\theta\in\Theta\backslash\Bp_1$; that is $y$ is a direction that ``pushes up'' the values of $\xso$ that are at zero. The directional derivative of $W$ around $\xso$ in the direction $y$ is defined similarly to the previous case but with a limit $t>0$:\vspace{-2mm}
\begin{equation*} 
 \ud W (\xso,y) = \lim_{t\to 0^+} \frac{W (\xso+ty) - W (\xso)}{t},\\[-2mm]
\end{equation*}
which gives\vspace{-2mm}
\begin{align*} 
 \ud W (\xso\!\!,y) \!=\!\!\!
\int_{\Theta_1} \!\!\!\!y_{\theta} \! \cdot \! \left[ \hp(\Gso)(D_p-\Gso) \!-\! h(\Gso) \!-\! \dpr_{\theta}(\xso_{\theta}) \right] \ud \mu (\theta).\\[-6mm]
\end{align*}
For $\xso$ to be optimal, it is necessary that $\ud W (\xso,y)\le0$, \ie\vspace{-2mm}
\begin{equation*}
 \int_{\Bp_1} y_{\theta} \cdot \left[ \hp(\Gso)(D_p-\Gso) - h(\Gso) - \dpr_{\theta}(\xso_{\theta}) \right] \ud \mu ( \theta) \le 0.\\[-1mm]
\end{equation*}
For this to hold for any function $y$ such that  $y_{\theta}\ge0$ for all $\theta\in\Bp_1$ and $y_{\theta}=0$ for all $\theta\in\Theta\backslash\Bp_1$, it is necessary that \vspace{-2mm}
\begin{equation}
\label{eq.nec2}
\hp(\Gso)(D_p-\Gso) - h(\Gso) - \dpr_{\theta}(\xso_{\theta}) \le 0,
\end{equation}
for almost-all $\theta\in\Bp_1$, that is for almost all $\theta$ such that $\xso_{\theta}=0$.

The case of subset $\Bp_3$ is handled similarly and yields the necessary condition: \vspace{-2mm}
\begin{equation}
\label{eq.nec3}
\hp(\Gso)(D_p-\Gso) - h(\Gso) - \dpr_{\theta}(\xso_{\theta}) \ge 0,\\[-1.5mm]
\end{equation}
for almost all $\theta$ such that $\xso_{\theta}=d_p$.

In summary, \eqref{eq.nec1}-\eqref{eq.nec3} show that for function $\xso$ to maximize $W$, it is necessary that $\xso$ is solution of the FOCs~\eqref{eq.FOC_continuous} where ${M^j}^{\prime}(G)$ is replaced by $\hp(G)(D-G)$. By assumption~\textbf{(A1)}, this is a decreasing function of $G$. Therefore the same proof as for Theorem~\ref{thm.Nash} shows that $\xso$ is uniquely determined almost-everywhere.

We now prove \textit{(ii)}. From the proof of part \textit{(i)}, it is clear that if $R=\Rso$ (resp. $r=\rso$), then the FOCs~\eqref{eq.FOC_continuous} at a Nash equilibrium coincide with the optimality condition, which gives the result.

\section{Proof of \protect Proposition~\ref{prop.variation_rR}}
\label{app.proof_variation_rR}

We provide the proof for the fixed-budget rebate mechanism ($j=R$). It is the same for the time-of-day pricing mechanism.

We first prove \textit{(i)} using three cases. 

{\it Case 1:} If $\Geq(R) = 0$, then the result is obvious. %

{\it Case 2:} If $\Geq(R) = D_p$, then we have $R/D_p - h(D_p) - \dpr_{\theta}(d_p) \ge 0$ for almost-all $\theta\in\Theta$, which implies  $\Rp/D_p - h(D_p) - \dpr_{\theta}(d_p) \ge 0$. Hence $\Geq(\Rp) = D_p$.

{\it Case 3:} If $\Geq(R) \in (0, D_p)$. For a given $G$, $\xresp(G)$ of \eqref{eq.xresp} is non-decreasing when $R$ increases to $\Rp$, and strictly increasing for $\theta$'s s.t. $\xresp(\Geq(R))\in(0, D_p)$. Since the set of such $\theta$'s is of positive measure, the new fixed-point has $\Geq(\Rp) > \Geq(R)$.

From \textit{(i)}, \textit{(ii)} follows clearly.

We finally prove \textit{(iii)}. 
The existence of the threshold $\Rbar$ follows from the fact that $ {M^R}^{\prime}(D_p) \xrightarrow[R\to\infty]{} \infty$, hence ${M^R}^{\prime}(D_p) - h(D_p) - \dpr_{\theta}(d_p) \ge 0$ for almost-all $\theta\in\Theta$ beyond $\Rbar$ (due to assumption~\textbf{(A3)}).

\section{Proof of \protect Proposition~\ref{prop.raffle}}
\label{app.proof_raffle}

For the fixed-budget rebate mechanism with $R>0$, if $G=0$, the unit reward is infinite, hence each user wants to contribute positively. Therefore, $G=0$ is not an equilibrium.

\section{Proof of \protect Proposition~\ref{prop.W}}
\label{app.proof_propW}

We provide the proof for the fixed-budget rebate mechanism ($j=R$). It is the same for the time-of-day pricing mechanism.

Let $R\ge0$ and denote for simplicity $x = \xeq(R)$ and $G = \Geq(R)$. Using the notation of the proof of Theorem~\ref{thm.SO} (Appendix~\ref{app.proof_SO}), the derivative of the welfare around $x$ in the direction $y\in \X$ is \vspace{-2mm}
\begin{equation*} 
\ud W (x,y) \!=\!\! \int_{\Theta} y_{\theta} \cdot \left[ \hp(G)(D_p-G) - h(G) - \dpr_{\theta}(x_{\theta}) \right] \ud \mu ( \theta).\\[-1mm]
\end{equation*}

Suppose first that $R<\Rso$ and consider the direction $y\in\X$ such that \vspace{-2mm}
$$y_{\theta} = \dd{\xeq_{\theta}(R)}{R}, \quad \forall \theta\in\Theta.\\[-1mm]$$ 
By Proposition~\ref{prop.variation_rR}, we have $\Geq(R)\le\Gso$ and $\xeq_{\theta}(R) \le \xso_{\theta}$ for almost-all $\theta\in\Theta$. 
Therefore, we have $y_{\theta} \ge 0$ for all $\theta\in\Theta_2\cup\Theta_3$ and $y_{\theta} = 0$ for $\theta\in\Theta_1$ (recall that $\Theta_1$, $\Theta_2$ and $\Theta_3$ are the subsets of $\Theta$ where $\xso=0$, $\xso\in (0, d_p)$ and $\xso=d_p$ respectively). 
Moreover, we have $\hp(G)(D_p-G) - h(G) - \dpr_{\theta}(x_{\theta}) \ge 0$ for all $\theta\in\Theta_2\cup\Theta_3$. Indeed, in $\Theta_2$, $\hp(\Gso)(D_p-\Gso) - h(\Gso) - \dpr_{\theta}(\xso_{\theta}) = 0$ and $\hp(G)(D_p-G) - h(G) \ge \hp(\Gso)(D_p-\Gso) - h(\Gso)$ (by concavity of $h(G)(D_p-G)$) and  $\cpr_{\theta}(x_{\theta}) \le \cpr_{\theta}(\xso_{\theta})$. In $\Theta_3$, $\hp(\Gso)(D_p-\Gso) - h(\Gso) - \dpr_{\theta}(\xso_{\theta}) \ge 0$ and $\hp(G)(D_p-G) - h(G) \ge \hp(\Gso)(D_p-\Gso) - h(\Gso)$ and  $\cpr_{\theta}(x_{\theta}) = \cpr_{\theta}(\xso_{\theta})$.
We conclude that $\ud W (x,y) \ge 0$,  therefore $\Weq(R)$ increases with $R$. 

The case $R\in(\Rso, \Rbar)$ is handled similarly and if $R\ge \Rbar$, the equilibrium does not vary with $R$ by Proposition~\ref{prop.variation_rR}-\textit{(iii)}.

\section{Proof of \protect Proposition~\ref{prop.robustness_h}}
\label{app.proof_robustness_h_prop}

\subsection{Proof of the Proposition}

From the proof of Theorem~\ref{thm.Nash}, we know that for any mechanism $j\in\{R, T\}$, $\Geq_j$ is the fixed-point solution of \eqref{eq.fixed-point}. 
Here, we explicitly write the dependence in the mechanism, \ie for mechanism $j$, we denote by $\xresp_{j, \theta}(G)$ the individual best response~\eqref{eq.xresp}, and by $\Gresp_j(G)$ the aggregate best response~\eqref{eq.fixed-point2} to a given $G\in [0, D_p]$. 
We use the notation $r_j$ (rather than ${M^j}^{\prime}$) for the unit reward (see \eqref{eq.unit_r}). 
From the proof of Theorem~\ref{thm.SO}, we know that $\Gso$ is found as the fixed-point solution of the same equation \eqref{eq.fixed-point} with $r_j(G) = r_{SO}(G)$ defined by \eqref{eq.unit_r_SO}. Therefore, we will use the notation  $\Gso = \Geq_{SO}$ which emphasizes this similarity and helps shorten the proof's notation.

Before evaluating the variations of the equilibrium  with $\epsilon$, note that when $\epsilon=0$ (baseline), we have the same equilibrium:\vspace{-1mm}
\begin{equation}
\label{eq.sameG}
\Geq_{R}(0) = \Geq_{T}(0) = \Geq_{SO}(0),\\[-1mm]
\end{equation}
and the same unit rewards at equilibrium:\vspace{-2mm}
\begin{equation}
\label{eq.samer}
r_{R}(G) = r_{T}(G) = r_{SO}(G) \;\; \textrm{ if }\;\; G = \Geq_{R}(0).\\[-1mm]
\end{equation}

When $\epsilon\neq0$, functions $c_{\theta}$ are perturbed. %
For a given $G\in [0, D_p]$, the aggregate best response is  modified accordingly. We denote by  $\Gresp_{j, \epsilon}(G)$ the new  aggregate best response. Recall that we also denote by $\Geq_j(\epsilon)$ the new equilibrium point which is the fixed point of $\Gresp_{j, \epsilon}(\cdot)$.
The following lemma readily implies the result of Proposition~\ref{prop.robustness_h}.
\begin{lemma}
\label{lemma2}
For any $j\in\{R, T, SO\}$, we have \vspace{-2mm}
\begin{equation}
\label{eq.Geqgamma}
\Geq_j(\epsilon)  = \Geq_j(0) + \frac{J_{\epsilon}}{1+\alpha_{j}}+ o\left(\epsilon\right),\\[-1mm]
\end{equation}
where \vspace{-2mm}
\begin{equation*}
J_{\epsilon} = \Gresp_{j, \epsilon}\left(\Geq_{R}(0)\right) - \Gresp_{j, 0}\left(\Geq_{R}(0)\right)\\[-1mm]
\end{equation*}
is a first-order quantity in $\epsilon$ independent of the mechanism $j$, and\vspace{-2mm}
\begin{equation}
\label{eq.defalpha}
\alpha_{j} = -\dd{\Gresp_{j}}{G}\left(\Geq_{R}(0)\right).\\[-1mm]
\end{equation}
\end{lemma}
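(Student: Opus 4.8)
The plan is to read \eqref{eq.Geqgamma} as a first-order perturbation of the fixed-point equation that defines the equilibrium. Write $G_0 = \Geq_R(0)$ for the common unperturbed equilibrium, which by \eqref{eq.sameG} equals $\Geq_T(0)=\Geq_{SO}(0)$. Recall from the proof of Theorem~\ref{thm.Nash} that, for each $j$, the perturbed equilibrium $\Geq_j(\epsilon)$ is the unique fixed point of $\Gresp_{j,\epsilon}(\cdot)$, i.e.\ $\Gresp_{j,\epsilon}\!\left(\Geq_j(\epsilon)\right) = \Geq_j(\epsilon)$, and that $\Gresp_{j,0}(G_0)=G_0$. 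Set $\Delta_j = \Geq_j(\epsilon)-G_0$, which tends to $0$ with $\epsilon$ by continuity of the fixed point in the perturbation.

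First I would evaluate the fixed-point identity at $\Geq_j(\epsilon)=G_0+\Delta_j$ and split it by adding and subtracting $\Gresp_{j,\epsilon}(G_0)$:
\begin{equation*}
G_0 + \Delta_j = \underbrace{\left[\Gresp_{j,\epsilon}(G_0)-G_0\right]}_{=\,J_\epsilon} \;+\; G_0 \;+\; \underbrace{\left[\Gresp_{j,\epsilon}(G_0+\Delta_j)-\Gresp_{j,\epsilon}(G_0)\right]}_{\text{feedback}}.
\end{equation*}
The first bracket equals $J_\epsilon$ because $\Gresp_{j,0}(G_0)=G_0$. For the feedback term I would Taylor-expand in $G$ at $G_0$: since the slope of $\Gresp_{j,\epsilon}$ differs from that of $\Gresp_{j,0}$ by $o(1)$ as $\epsilon\to0$, the term equals $\dd{\Gresp_{j}}{G}(G_0)\,\Delta_j + o(\Delta_j) = -\alpha_j\Delta_j + o(\Delta_j)$, with $\alpha_j$ as in \eqref{eq.defalpha}. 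Cancelling $G_0$ gives $\Delta_j(1+\alpha_j) = J_\epsilon + o(\Delta_j)$. Since $\Gresp_j$ is non-increasing (again from the proof of Theorem~\ref{thm.Nash}) we have $\alpha_j\ge0$, hence $1+\alpha_j\ge1$ and the division is legitimate; together with $\Delta_j=O(\epsilon)$, so that $o(\Delta_j)=o(\epsilon)$, this yields \eqref{eq.Geqgamma}.

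It remains to show that $J_\epsilon$ is independent of $j$ and genuinely of order $\epsilon$. For the former I would invoke \eqref{eq.samer}: at the common point $G_0$ the three unit rewards coincide, $r_R(G_0)=r_T(G_0)=r_{SO}(G_0)$. By \eqref{eq.xresp}, the individual best response of each type $\theta$ at $G=G_0$ — perturbed or not — depends on the mechanism only through $r_j(G_0)-h(G_0)$ together with the (common) cost function, hence is identical across $j\in\{R,T,SO\}$. Integrating over $\theta$ via \eqref{eq.fixed-point2}, both $\Gresp_{j,\epsilon}(G_0)$ and $\Gresp_{j,0}(G_0)$ are $j$-independent, and therefore so is $J_\epsilon$. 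That $J_\epsilon$ does not vanish at first order in $\epsilon$ is exactly the standing non-degeneracy assumption on the perturbation direction at $\Gso$.

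The step I expect to be the main obstacle is making the feedback expansion rigorous, i.e.\ justifying the existence of the slope $\alpha_j$ and controlling the remainders uniformly. Concretely, I would establish differentiability of $G\mapsto\Gresp_j(G)$ at $G_0$ by differentiating under the integral sign in \eqref{eq.fixed-point2}: for types in the interior regime $\xresp_{j,\theta}$ is differentiable with derivative governed by $(\cpr_\theta)^{-1}$, for types strictly on a boundary it is locally constant, and the set of types sitting exactly at a switching threshold has $\mu$-measure zero, so a Leibniz/dominated-convergence argument applies thanks to assumptions \textbf{(A2)}--\textbf{(A3)}. The same bounds, together with $\sup_{\theta}\sup_{x}|p^{\prime}_{\theta}(x)|<\infty$, let me dominate the $\epsilon$-dependence uniformly in $\theta$, ensuring that the $o(\epsilon)$ and $o(\Delta_j)$ terms are genuine and that the slope of $\Gresp_{j,\epsilon}$ indeed converges to that of $\Gresp_{j,0}$ as $\epsilon\to0$.
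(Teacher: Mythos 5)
Your proposal is correct and follows essentially the same route as the paper: your algebraic decomposition of the fixed-point identity (isolating $J_\epsilon$ and linearizing the ``feedback'' term with slope $-\alpha_j$) is precisely the paper's geometric argument of the perturbed best-response curve re-crossing the first bisector, and you establish the $j$-independence of $J_\epsilon$ via \eqref{eq.samer} and \eqref{eq.xresp} exactly as the paper does. Your additional care about differentiating under the integral sign and about $\alpha_{j,\epsilon}=\alpha_j+o(1)$ only makes explicit what the paper leaves implicit.
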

\begin{proof}
First note that  $\Gresp_{j, \epsilon} \left(\Geq_{R}(0)\right)$ is continuously differentiable with respect to $\epsilon$, and by assumption, the first-order term in $J_{\epsilon}$ is non-zero. Moreover, due to \eqref{eq.samer}, $\Gresp_{j, \epsilon} \left(\Geq_{R}(0)\right)$ is independent of the mechanism $j$ (see \eqref{eq.fixed-point2} and \eqref{eq.xresp}), and so is $J_{\epsilon}$.

Starting from the point $\left(\Geq_R(0), \Gresp_{j, \epsilon} \left(\Geq_{R}(0)\right)\right)$, at the first order, $\Gresp_{j, \epsilon}(G)$ decreases linearly when $G$ increases. Therefore it can be seen geometrically that it will cross again the first bisector at the new equilibrium point \vspace{-1mm}
\begin{equation}
\label{eq.Geqgamma2}
\Geq_j(\epsilon)  = \Geq_j(1) + \frac{J_{\epsilon}}{1+\alpha_{j,\epsilon}} + o\left(\epsilon\right), \\[-1mm]%
\end{equation}
where $-\alpha_{j,\epsilon}$ is the slope of the curve $\Gresp_{j, \epsilon}(G)$ at  $G=\Geq_R(0)$, \ie \vspace{-3mm}
\begin{equation}
\label{eq.defalpha2}
\alpha_{j, \epsilon} = -\dd{\Gresp_{j,\epsilon}}{G}\left(\Geq_{R}(0)\right).
\end{equation}
From~\eqref{eq.Geqgamma2}, it is easy to see that since $J_{\epsilon}$ is first-order in $\epsilon$, the first-order term in the Taylor series of $\alpha_{j, \epsilon}$ will give a second-order term in the Taylor series of $\Geq_j(\epsilon)$.  Therefore, we can restrict the series of $\alpha_{j, \epsilon}$~\eqref{eq.defalpha2} at the order zero: $\alpha_{i, \epsilon}=\alpha_{j}+o\left(1\right)$, which directly gives the desired result \eqref{eq.Geqgamma}.
\end{proof}

\subsection{Reduction of \textbf{(C1-2)} to \textbf{(C1$^\prime$-2$^\prime$)}}
\label{app.reduction}

We have \vspace{-1mm}
\begin{align*}
\dd{\Gresp_{j}}{G}(G) &= \dd{}{G} \left[  \int_{\Theta} \xresp_{j, \theta}(G)  \ud \mu ( \theta) \right], \\ 
&=   \int_{\Theta} \dd{}{G} \left[ \xresp_{j, \theta}(G) \right] \ud \mu ( \theta),   \\ 
&=   A_{j}(G) \cdot \left( \rp_j(G) - \hp(G) \right), \\[-7mm]\nonumber
\end{align*}
where\vspace{-2mm}
\begin{equation}
A_{j}(G) = \int_{\Theta_{2, j}(G)}  \left(\left( \cpr_{\theta}\right)^{-1}\right)^{\prime} (r_j(G) - h(G))  \ud \mu ( \theta), \\[-1mm]
\end{equation}
and $\Theta_{2, j}(G)$ is the subset of $\theta$'s for which $\xresp_{j, \theta}(G)\in(0, d_p)$.

At $G=\Geq_R(0)$, $\Theta_{2, j}(G)$ is independent of the mechanism $j$, so that \eqref{eq.samer} shows that $A_{j}(\Geq_R(0))$ is independent of the mechanism $j$. Denoting by $A = A_{j}(\Geq_R(0))$ the common value, we have for all $j\in\{R, T, SO\}$\vspace{-1mm}
\begin{equation*}
\alpha_j = A \cdot \left( \rp_j\left(\Geq_R(0)\right) - \hp\left(\Geq_R(0)\right) \right). \\[-1mm]
\end{equation*}

If we assume that for  $j_1\neq j_2$, $|\alpha_{j_1}-\alpha_{j_2}|$ is small, then\vspace{-1.5mm}
\begin{eqnarray}
\nonumber  \left|    \frac{1}{1+\alpha_{j_1}} - \frac{1}{1+\alpha_{j_2}}   \right| \!\!\!\!&=&\!\!\!\! \left|    \frac{\alpha_{j_2} - \alpha_{j_1}}{1+\alpha_{j_1}}\right| + o (|\alpha_{j_1}-\alpha_{j_2}|),\\
\nonumber &=&\!\!\!\! \frac{A}{1+\alpha_{j_1}} \left| \rp_{j_2} - \rp_{j_1} \right| + o (\left| \rp_{j_2} - \rp_{j_1} \right|). \\[-5mm]\nonumber
\end{eqnarray}
With this, conditions \textbf{(C1$^{\prime}$-2$^{\prime}$)} are easily deduced from \textbf{(C1-2)}.

\section{Proof of \protect Theorem~\ref{thm.robustness_h}}
\label{app.proof_robustness_h_thm}

We consider the aggregate welfare~\eqref{eq.defW} as a function of $G$: $W(G) = W(\xresp(G))$. 
We have $\dd{W}{G}(\Gso(\epsilon)) = 0$.
The result of Theorem~\ref{thm.robustness_h} is then deduced from Proposition~\ref{prop.robustness_h} using a taylor expansion around $\Gso(\epsilon)$: for $j\in\{R, T\}$,\vspace{-1mm}
\begin{equation*}
W(\Geq_j(\epsilon)) = W(\Gso(\epsilon)) + O\left( (\Geq_j(\epsilon) - \Gso(\epsilon))^2 \right).
\end{equation*}

\end{document}